\documentclass[sigconf]{acmart}
\usepackage{tabularx}
\usepackage{array}
\usepackage{booktabs}
\usepackage{amsmath} % 或 mathtools
\usepackage{enumitem}
\usepackage{subcaption}
\usepackage{multirow}   % 如果以后需要跨行
\usepackage{xcolor}
\usepackage{balance}
\usepackage{mathtools}
\usepackage{pifont}
\usepackage{setspace}
\usepackage{enumitem}
\AtBeginDocument{%
  }
\usepackage[ruled,vlined,linesnumbered]{algorithm2e}

\setcopyright{none}
\acmDOI{}
\acmISBN{}
\copyrightyear{2026}
\acmYear{2026}
\acmConference[Preprint]{Preprint}{2026}{}
\newcommand{\yin}[1]{\textcolor{magenta}{#1}}
\long\def\comment#1{}

\begin{document}

%%
%% The "title" command has an optional parameter,
%% allowing the author to define a "short title" to be used in page headers.
\title{Efficient Graph Indexing for Interval-Aware Vector Search}

%%
%% The "author" command and its associated commands are used to define
%% the authors and their affiliations.
%% Of note is the shared affiliation of the first two authors, and the
%% "authornote" and "authornotemark" commands
%% used to denote shared contribution to the research.
\settopmatter{authorsperrow=4}

\author{Siyuan Liang}
\affiliation{%
  \institution{Beijing Institute of Technology}
  \city{Beijing}
  \country{China}}
\email{lsy2004bit@126.com}

\author{Ziqi Yin}
\affiliation{%
  \institution{Nanyang Technological University}
  \city{Singapore}
  \country{Singapore}}
\email{ziqi003@e.ntu.edu.sg}

\author{Qi Zhang}
\affiliation{%
  \institution{University of Science and Technology Beijing}
  \city{Beijing}
  \country{China}}
\email{qizhangcs@ustb.edu.cn}

\author{Ronghua Li}
\affiliation{%
  \institution{Beijing Institute of Technology}
  \city{Beijing}
  \country{China}}
\email{lironghuabit@126.com}

\author{Guoren Wang}
\affiliation{%
  \institution{Beijing Institute of Technology}
  \city{Beijing}
  \country{China}}
\email{wanggrbit@126.com}

\author{Kaiwen Xue}
\affiliation{%
  \institution{Huawei Technologies Co., Ltd.}
  \city{Beijing}
  \country{China}}
\email{xuekaiwen6@huawei.com}

\author{Daiyin Wang}
\affiliation{%
  \institution{Huawei Technologies Co., Ltd.}
  \city{Beijing}
  \country{China}}
\email{wangdaiyin@huawei.com}

\author{Xubin Li}
\affiliation{%
  \institution{Huawei Technologies Co., Ltd.}
  \city{Beijing}
  \country{China}}
\email{lixubin@huawei.com}

\renewcommand{\shortauthors}{Liang et al.}

%%
%% The abstract is a short summary of the work to be presented in the
%% article.liang
%  

\begin{abstract}
Interval-aware Approximate Nearest Neighbor (ANN) search arises in applications where each object is associated with a numeric value or interval, and queries must satisfy both vector-similarity and interval constraints. Existing methods are typically tailored to a single query semantics, such as interval-filtered ANN search, and therefore require multiple specialized indexes to support diverse workloads, leading to substantial indexing and memory overhead. To address this limitation, we propose the Unified Interval-aware Relative Neighborhood Graph (URNG), a unified graph framework for interval-aware ANN search. URNG preserves the monotonic searchability of relative-neighborhood-graph based ANN indexes while additionally ensuring structural heredity over query-induced subgraphs, enabling a single index to support multiple interval-aware query semantics. Building on this framework, we develop UG, a practical graph index that efficiently approximates URNG through unified interval-aware pruning and iterative repair, together with a query algorithm for interval-aware ANN search. Extensive experiments on 5 datasets show that UG consistently achieves a strong accuracy-efficiency trade-off across diverse interval-aware workloads while maintaining competitive index construction cost and memory usage.

\end{abstract}

\maketitle

\section{Introduction}

Driven by advances in deep learning and large language models~\cite{lecun2015deep,brown2020language,reimers2019sentence}, vector search has become a core component in many real-world applications, including recommendation systems~\cite{cover1967nearest,sarwar2001item}, information retrieval~\cite{liu2007survey}, data mining~\cite{ester1996density,breunig2000lof}, and retrieval-augmented generation (RAG)~\cite{zhao2026retrieval}. However, due to the curse of dimensionality~\cite{beyer1999nearest,indyk1998approximate,weber1998quantitative}, exact nearest neighbor (NN) search in high-dimensional spaces is often too expensive to satisfy practical latency requirements. Approximate nearest neighbor (ANN) search addresses this challenge by substantially improving efficiency with only limited loss in accuracy.

Over the past few decades, a wide range of ANN indexes have been proposed, including tree-based, hashing-based, quantization-based, and graph-based methods~\cite{arora2018hd,beygelzimer2006cover,wang2014hashing,wang2017survey,gao2024rabitq,ge2013optimized,fu2021high,fu2017fast,yin2025deg,yin2026bbcimprovinglargekapproximate}. Among them, graph-based methods consistently achieve a strong trade-off between query accuracy and efficiency~\cite{azizi2025graph,wang2021comprehensive}. Their effectiveness fundamentally relies on graph navigability: well-connected routing structures enable the search to approach high-quality neighbors while examining only a small fraction of the dataset. In particular, graph indexes that approximate the Relative Neighborhood Graph (RNG) are especially appealing, as they often examine fewer objects while maintaining high recall~\cite{toussaint1980relative}. 

% These methods organize data objects as graph nodes and connect nearby vectors through proximity-aware edges. During query processing, search typically starts from an entry node and progressively moves toward the query through greedy graph traversal, often with beam-style exploration. 

However, many real-world similarity search tasks are not purely vector-based. In addition to vector similarity, retrieved objects are often required to satisfy interval constraints associated with numeric attributes such as time, price, or validity ranges. A representative example arises in video surveillance and object tracking, where each detected object is associated not only with an embedding vector, but also with a temporal interval indicating when it appears in the camera's field of view. Depending on the application, the system may need to support different interval-aware ANN query, as illustrated below.
\begin{enumerate}[leftmargin=*,topsep=0pt]
\item Given a query vector $v$ and a time window $q.I$, one may want to retrieve visually similar objects whose visible durations are fully contained by the query range, %, i.e., $[o.a_s,o.a_t]\subseteq q.I$. This
which is referred to as Interval-Filtering ANN (IFANN) query~\cite{yang2025hi}. %Range-Filtering ANN (RFANN) queries can be regarded as a special case of IFANN, where each object is associated with a scalar attribute $o.a$ rather than an interval, namely $o.a_s=o.a_t=o.a$, and the filtering condition becomes $o.a\in q.I$~\cite{zou2026rnsgrangeawaregraphindex,xu2024irangegraph}.

\item Conversely, given a query vector $v$ and a time interval $q.I$, one may want to retrieve visually similar objects whose visible durations cover the entire query interval, %i.e., $[o.a_s,o.a_t]\supseteq q.I$. This
which is referred to as Interval-Stabbing ANN (ISANN) queries. When the query interval degenerates to a single timestamp $t$, %i.e., $q.I=[t,t]$, '
ISANN reduces to Range-Stabbing ANN (RSANN) queries, also known as the timestamp ANN query~\cite{wang2025timestamp}, which retrieve nearest neighbors that remain active at the specific time $t$.
\end{enumerate}
%For example, given a query vector $v$ and a time window, one may want to retrieve the nearest neighbors whose temporal intervals are fully contained in the query range, namely Interval-Filtering ANN (IFANN) queries~\cite{yang2025hi}. In another case, given a query vector and a timestamp, one may instead retrieve the nearest neighbors that are active at that specific time, namely Range-Stabbing ANN (RSANN) queries~\cite{wang2025timestamp}.

A natural question is whether a single graph index can support these diverse interval-aware ANN queries efficiently. Unfortunately, this is non-trivial. The key challenge is structural: under interval constraints, search is effectively restricted to the query-valid objects, inducing a query-dependent subgraph of the original graph. Even if the original graph is highly navigable, the induced subgraph may lose critical routing nodes and edges, thereby severely degrading search quality. Moreover, different query semantics, such as interval containment and timestamp stabbing, can induce different subgraphs under the same range condition. As a result, a graph index that performs well for standard ANN search does not necessarily remain effective for interval-aware ANN retrieval.

Partly due to this challenge, existing interval-aware ANN methods are typically specialized to a single query semantics rather than designed as a unified index. For example, Hi-PNG~\cite{yang2025hi} targets IFANN queries by building multiple graph indexes over different sub-ranges and selecting qualified sub-indexes for routing at query time. While effective for its target setting, such methods do not directly support a unified family of interval-aware queries. Consequently, supporting multiple query semantics often requires maintaining multiple specialized indexes, which leads to substantial indexing and memory overhead.

In this paper, we propose the Unified interval-aware Relative Neighborhood Graph (URNG), a unified graph framework for interval-aware ANN search under numeric inclusion semantics. Unlike existing filtered graph designs, URNG explicitly accounts for both encompassing and subsumed relationships during graph construction. By carefully designing the pruning conditions, URNG preserves the search-friendly structure of RNG-style graphs under interval-aware filtering, while incurring only a modest constant-factor overhead in storage under a uniform interval model. These properties enable URNG to support a broad family of interval inclusion queries within a single graph index.

Constructing an exact URNG on large-scale datasets, however, remains computationally prohibitive. To bridge theory and practice, we further develop the Unified Graph index (UG), a practical graph index that efficiently approximates URNG. The construction of UG consists of three phases: (1) generating an initial candidate set for each node by jointly considering spatial and interval proximity; (2) constructing a local UG for each node via iterative pruning; and (3) merging these local structures to approximate the global graph. Based on UG, we further design an efficient query processing algorithm for interval-aware ANN search. In summary, our main contributions are as follows:

\comment{
\yin{To support these diverse interval–aware ANN queries, a variety of indexes have been proposed. For example, ~\cite{yang2025hi} proposes Hi-PNG to handle Interval-Filtered ANN queries, which constructs multiple graph-based indexes for different sub-ranges and selects the sub-indexes that satisfy the query conditions for routing at query time. However, they are all designed to handle a specific type of queries, but not for all. This results in substantial overhead in terms of indexing time and memory consumption. %For instance, xx. There is an approach that can support all~[xx], but it cannot support any best.
}}

\begin{itemize}[leftmargin=*,topsep=0pt]
    \item {\textbf{Unified graph framework.} We propose the URNG, a unified theoretical framework for interval-aware ANN search under %numeric inclusion
    various interval semantics queries. URNG preserves the monotonic searchability of RNG-style graphs while ensuring structural heredity over query-induced subgraphs, i.e., the subgraph induced by a valid query range remains a valid URNG. We further show that, under a uniform interval model, URNG incurs only a constant-factor overhead over the corresponding RNG in index size and search complexity.}
% These properties enable URNG to support a broad family of interval inclusion queries within a single graph index.
    \item {\textbf{Practical unified index.} Building on URNG, we develop UG, a practical and efficient graph index that approximates the proposed framework. We design a fast interval-aware pruning technique to accelerate UG construction, and employ an iterative refinement procedure to empirically preserve monotonic searchability. In practice, the number of required iterations is small. The time complexity of each iteration is bounded by $\mathcal{O}(|C| \cdot M_{ug})$, where $|C|$ is a small constant in practice and $M_{ug}$ denotes the number of edges of UG, which ensures good scalability on large-scale datasets. We further show that UG satisfies key properties for robust performance under diverse query ranges. Based on UG, we also design an efficient query processing algorithm for graph navigation under interval-aware predicates.}

    \item {\textbf{Extensive Experiments.} We conduct extensive experiments on five datasets to evaluate the effectiveness, robustness, and scalability of UG. The results show that, for IFANN queries, UG consistently achieves the best QPS--Recall trade-off across all datasets and substantially outperforms existing hierarchical baselines in the high-recall regime. UG also remains competitive in index construction time and memory consumption, indicating that its query gains do not come at the cost of excessive indexing overhead. In addition, UG is robust under diverse filtering workloads and continues to perform strongly even for highly selective queries. Beyond IFANN, UG also achieves strong performance on ISANN and remains competitive on RFANN and RSANN despite not being specialized for these query types. Finally, scalability experiments on datasets with up to 40M vectors show that UG scales smoothly with increasing data size.}
\end{itemize}

\section{Background and Motivation}
\label{sec:preliminaries}

In this section, we first introduce the interval-aware ANN %setting
queries studied in this paper, then review the main design strategies for such queries%constrained ANN search
, and finally explain why existing methods remain insufficient for supporting diverse interval-aware workloads within a unified index.

\subsection{Data Model and Query Types}
\label{subsec:problem_def}

\noindent\textbf{Dataset.}
We consider a dataset $D$ of $n$ objects. Each object $o=(v,a_s,a_t)$ consists of a $d$-dimensional vector $o.v$ and two numerical attributes $o.a_s,o.a_t \in \mathbb{R}$ with $o.a_s \leq o.a_t$. The pair $(o.a_s,o.a_t)$ specifies the interval associated with $o$. Unless otherwise stated, vector distance is measured by the Euclidean distance.

Based on this model, we study interval-aware approximate nearest neighbor (ANN) search, where nearest neighbor retrieval is performed under query-dependent interval constraints. All queries are parameterized by a tuple $q=\langle v,I,k\rangle$, where $q.v$ is a query vector, $q.I=[a_l,a_r]$ is a query interval, and $q.k$ is the number of desired results. The query semantics differ in how object intervals are matched against $q.I$.

\noindent\textbf{Interval-Filtered ANN (IFANN) Query~\cite{yang2025hi}.}
An IFANN query returns the $q.k$ objects in $D$ with the smallest Euclidean distances to $q.v$ among those satisfying $[o.a_s,o.a_t] \subseteq q.I$.

\noindent\textbf{Interval-Stabbing ANN (ISANN) Query.}
An ISANN query returns the $q.k$ objects in $D$ with the smallest Euclidean distances to $q.v$ among those satisfying
$[o.a_s,o.a_t] \supseteq q.I$.

Two widely studied query types arise as special cases.

\noindent\textbf{Range-Filtered ANN (RFANN) Query~\cite{xu2024irangegraph,zou2026rnsgrangeawaregraphindex}.}
RFANN is a special case of IFANN in which each object is associated with a single scalar value $o.a$, i.e., $o.a_s=o.a_t=o.a$. The query returns the $q.k$ nearest objects satisfying $o.a \in q.I$.

\noindent\textbf{Range-Stabbing ANN (RSANN) Query~\cite{wang2025timestamp}.}
RSANN is a special case of ISANN in which the query interval degenerates to a single scalar value $t$, i.e., $q.I=[t,t]$. The query returns the $q.k$ nearest objects satisfying $t \in [o.a_s,o.a_t]$. This query is referred to as Timestamp ANN (TANN) in prior work~\cite{wang2025timestamp}. To unify the terminology across interval-aware ANN queries, we refer to it as RSANN in this paper, highlighting that it retrieves objects whose intervals are stabbed by the query timestamp.

Although these queries differ in interval semantics, they can be cast into a common interval-aware ANN formulation.

\subsection{Search Strategies and Graph Backbone}
\label{subsec:graph_ann}
Existing solutions to interval-aware ANN search can be understood from two complementary perspectives: how interval constraints are enforced, and what ANN backbone is used for similarity search. These two perspectives are orthogonal: interval constraints may be handled in different ways, while the underlying ANN search can still be built on a graph-based backbone.

From the perspective of constraint handling, existing methods typically fall into three categories. \emph{Pre-filtering} first identifies all objects satisfying the interval predicate and then performs ANN search over the valid subset~\cite{wang2021milvus,wei2020analyticdb}. \emph{Post-filtering} first retrieves candidates according to vector similarity and then discards those violating the predicate~\cite{li2018design}. \emph{Specialized indexing} designs dedicated structures for particular query semantics to support more efficient constrained retrieval~\cite{xu2024irangegraph,yang2025hi}.

From the perspective of ANN search, graph-based indexes have become one of the most effective backbones for modern vector retrieval. They organize objects into proximity graphs and answer queries through greedy graph traversal, achieving a strong balance between search accuracy and efficiency in high-dimensional spaces. This makes graph-based ANN a natural foundation for interval-aware retrieval.

\begin{figure}[t!] 
    \centering
    \vspace{-2em}
    \includegraphics[width=0.9\linewidth]{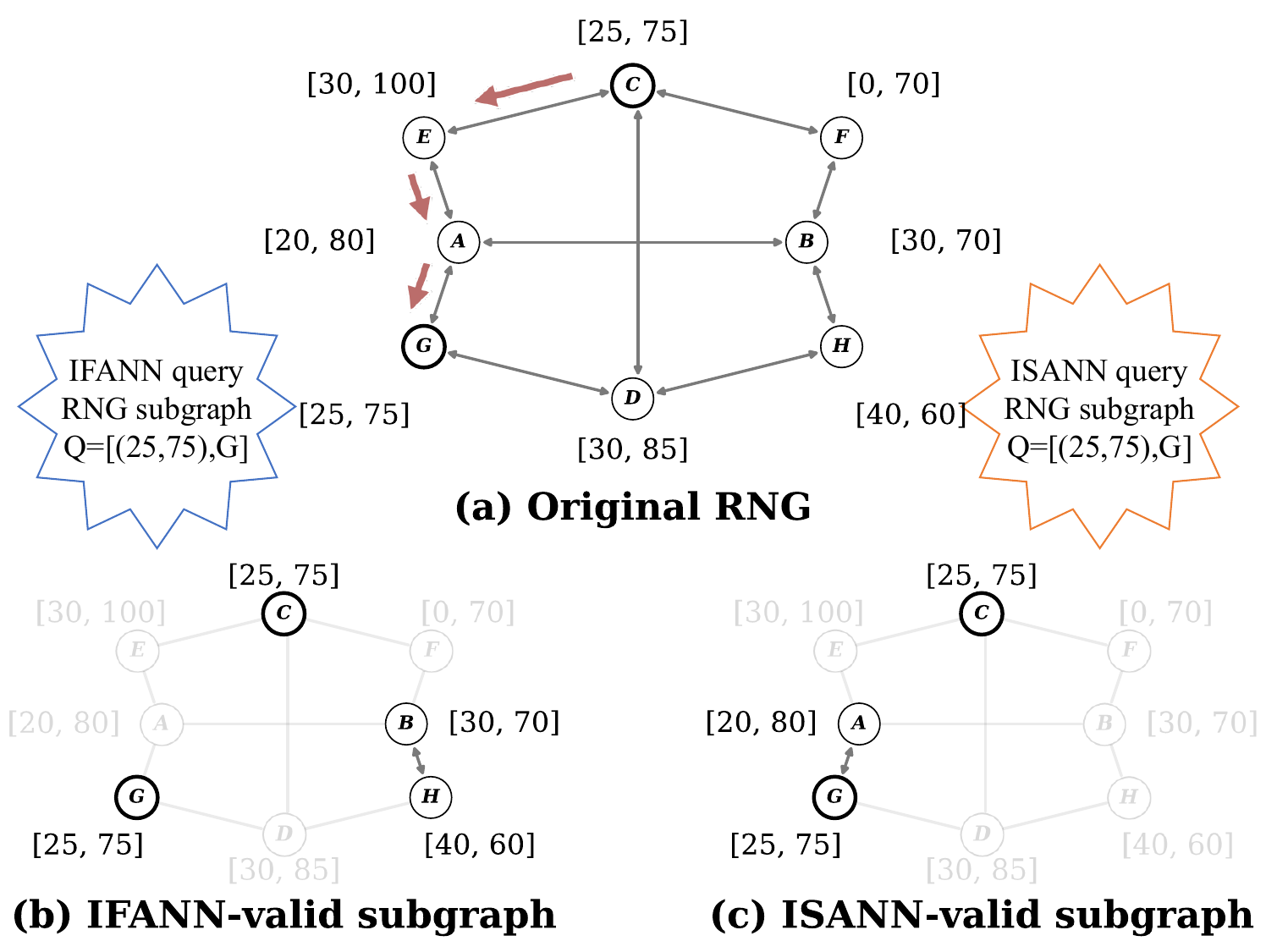} 
    \vspace{-1.5em}
    \caption{Interval constraints induce query-dependent searchable subgraphs. Starting from the same RNG, IFANN and ISANN retain different valid objects and edges, potentially invalidating critical routing structures.}
    \vspace{-2em}
    \label{fig:rng_failure}
\end{figure}

\subsection{Challenges under Interval Constraints}
\label{subsec:limitations}
Despite this common design space, existing methods remain insufficient for supporting diverse interval-aware workloads within a unified graph index. In practice, many solutions are still tailored to individual query semantics rather than designed as a shared framework across IFANN, ISANN, RFANN, and RSANN.

A natural baseline is to combine interval filtering with ANN search in a loosely coupled manner. Pre-filtering first materializes the valid subset and then performs ANN search on it, which is effective for low-selectivity queries~\cite{gollapudi2023filtered,gupta2023caps}, but can be prohibitively expensive when the valid subset remains large. Post-filtering first retrieves candidates by vector similarity and then removes invalid ones — this approach suits high-selectivity queries~\cite{wang2021milvus,wei2020analyticdb} — but often wastes substantial search effort and requires considerable oversampling before enough valid results are found. As a result, loosely coupled strategies are generally inadequate for efficient interval-aware ANN search.

More fundamentally, interval-aware constraints alter the effective search space in a query-dependent manner, posing a structural challenge to graph-based ANN search. The effectiveness of graph ANN indexes relies on good navigability in the searchable graph. Under interval-aware queries, however, search is effectively confined to the valid objects determined by the query interval, yielding a query-induced subgraph of the original graph. This subgraph may differ significantly from the full graph in both connectivity and routing structure: nodes or edges that are critical to greedy navigation in the original graph may no longer remain available after applying the interval constraint. Consequently, a graph that is highly navigable on the full dataset may become poorly connected, or even disconnected, under interval-aware queries. Figure~\ref{fig:rng_failure} illustrates this effect using an RNG-style graph\footnote{\cite{fu2017fast} refers to the graph constructed based on the RNG’s pruning strategy as the
Monotonic Relative Neighborhood Graph (MRNG). For simplicity, we use the term RNG throughout this paper.}: under the same query interval, IFANN and ISANN induce different valid subgraphs, and routing structures available in the original graph may disappear after filtering.

Some specialized graph indexes partially alleviate this issue for their target query semantics. For example, strong IFANN solutions often rely on partitioned or multi-structure designs, such as hierarchical interval partitioning~\cite{yang2025hi}, to achieve high query efficiency. However, these approaches incur additional indexing and memory overhead, complicate updates and maintenance, and do not naturally generalize to other interval-aware query types. More broadly, existing specialized methods remain tightly coupled to specific predicates and therefore do not generalize naturally across interval-aware query types.

These observations suggest that the key challenge is not merely to optimize search for one particular interval predicate, but to design a unified graph framework whose search effectiveness is preserved under diverse query-dependent filtering conditions. This challenge motivates the framework developed in the next section.

\comment{
\subsection{Limitations of Existing Strategies}
\label{subsec:limitations}

Despite the success of graph-based ANNS in pure vector spaces, extending them to handle IF-ANNS queries remains a significant challenge. Existing solutions generally fall into three categories, each suffering from inherent structural limitations.

\subsubsection{Loose Coupling Strategies (Pre/Post-filtering)}
The most straightforward approach is to decouple the vector search from the interval filtering.
\begin{itemize}
    \item \textbf{Pre-filtering:} This method first retrieves all objects satisfying the interval constraint $I_o \subseteq I_q$ using an inverted index or a tree structure (e.g., Segment Tree), and then performs a linear scan or vector search on the result set. However, when the query selectivity is low (i.e., many objects satisfy the interval), the candidate set becomes excessively large, degrading the performance to that of a brute-force scan.
    \item \textbf{Post-filtering:} This method performs a standard vector search on the global graph to retrieve top-$k'$ candidates and then filters out those failing the interval constraint. As noted in recent studies, this approach suffers from the "oversampling" problem. To obtain $k$ valid results, the searcher must explore $k' \gg k$ candidates. In the worst case (e.g., disjoint intervals), the search may fail to return enough valid results even after exhaustive exploration.
\end{itemize}

\subsubsection{Partition-based Indexing}
To address the inefficiency of loose coupling, partition-based methods like \textbf{Hi-PNG} and others employ a "divide-and-conquer" strategy.
Hi-PNG constructs a hierarchical quad-tree to partition the interval space into disjoint "fields" and builds separate proximity graphs for each leaf node[cite: 196, 197]. During a query, the system identifies relevant partitions and searches within their local graphs.

While effective for specific distributions, this approach introduces \textbf{structural fragmentation}. The geometric neighborhood of a data point is artificially sliced by partition boundaries. A query spanning multiple partitions forces the search algorithm to traverse multiple disjoint graphs or backtrack through the tree structure[cite: 223], breaking the desirable "single-graph, single-traversal" property. Furthermore, maintaining a dual index (tree + graphs) increases memory overhead and implementation complexity.

\subsubsection{Naive In-filtering and Connectivity Rupture}
The third strategy, \textit{In-filtering}, attempts to perform filtering on-the-fly during the graph traversal. The search algorithm navigates the global graph but discards invalid nodes ($I_o \not\subseteq I_q$) from the candidate queue.
While this avoids the overhead of partitioning, applying it to a standard proximity graph (e.g., HNSW, NSG) leads to the critical issue of \textbf{Connectivity Rupture}.

Standard graphs are constructed based solely on vector proximity. A node $u$ relies on specific "bridge" neighbors to route the search to distant targets. In the context of IF-ANNS, these bridge nodes may be geometrically optimal but fail the interval constraint for a specific query $I_q$.
When the filter implicitly removes these critical bridges, the induced subgraph of valid nodes becomes fragmented into disconnected components. Consequently, the greedy beam search gets trapped in local optima, unable to reach the true nearest neighbors located in other components.
This phenomenon highlights the fundamental need for a graph structure that is not only globally connected but also \textit{resiliently connected} under arbitrary interval filtering—a property we define as \textit{Structural Heredity} in the next section.
}

% ---------------------------------------------------------
% 3. Unified Interval-aware Relative Neighborhood Graph
% ---------------------------------------------------------

\section{Unified Interval-aware RNG}
\label{sec:urng}

In this section, we establish the theoretical foundation of the \emph{Unified Interval-aware Relative Neighborhood Graph} (URNG).
We proceed to formally define the URNG structure and provide a rigorous analysis of its graph-theoretic properties, including monotonic searchability, structural heredity, and asymptotic complexity bounds.

\begin{definition}[URNG]
\label{def:urng}
Let \( D \) be a dataset in a metric space with distance function \( \delta \), where each object is associated with an interval attribute \( I \).
The \emph{Unified Interval-aware Relative Neighborhood Graph} (URNG) on \( D \) is defined as a graph $G = (V, E, \mathrm{st})$,
where each node in \( V \) corresponds to an object in \( D \), and each edge \( (u,v) \in E \) is associated with a semantic bitmask $\mathrm{st}(u,v) = \bigl(b_{\mathrm{IF}}(u,v),\, b_{\mathrm{IS}}(u,v)\bigr) \in \{0,1\}^2.$ For any two distinct nodes \( u,v \in V \), the \textbf{IF bit} \( b_{\mathrm{IF}}(u,v) \) is set to \(1\) if and only if there does \emph{not} exist a witness node \( w \in V \setminus \{u,v\} \) such that all of the following conditions hold\footnote{Here, the union symbol for two intervals, $I_A \cup I_B$, denotes the interval $\min(I_A.l, I_B.l), \max(I_A.r, I_B.r)$. For convenience, throughout the subsequent description of the pruning conditions, $\cup$ is used with this meaning.}:
\[
\delta(u,w) < \delta(u,v), \delta(v,w) < \delta(u,v), I_w \subseteq I_u \cup I_v, \textit{and}\ b_{\mathrm{IF}}(u,w) = 1\]

Similarly, the \textbf{IS bit} \( b_{\mathrm{IS}}(u,v) \) is set to \(1\) if and only if there does \emph{not} exist a witness node \( w \in V \setminus \{u,v\} \) such that all of the following conditions hold:
\[
\delta(u,w) < \delta(u,v),\delta(v,w) < \delta(u,v),I_u \cap I_v \subseteq I_w ,\textit{and}\ b_{\mathrm{IS}}(u,w) = 1
\]

Accordingly, an edge \( (u,v) \) is active for \emph{IFANN queries} when \( b_{\mathrm{IF}}(u,v)=1 \), active for \emph{ISANN queries} when \( b_{\mathrm{IS}}(u,v)=1 \), and active for both query types when $\mathrm{st}(u,v) = (1,1)$.

\end{definition}

%This unified formulation will also facilitate the transition to the practical index introduced in Section~4, where the same bitmask-based representation is adopted for efficient construction and query processing.

\begin{figure}[t!] % 星号表示跨双栏，[t!] 强行置于页顶
    \centering
    % 请根据你的实际文件夹结构替换这里的路径
    \vspace*{-2em}
    \includegraphics[width=0.9\linewidth]{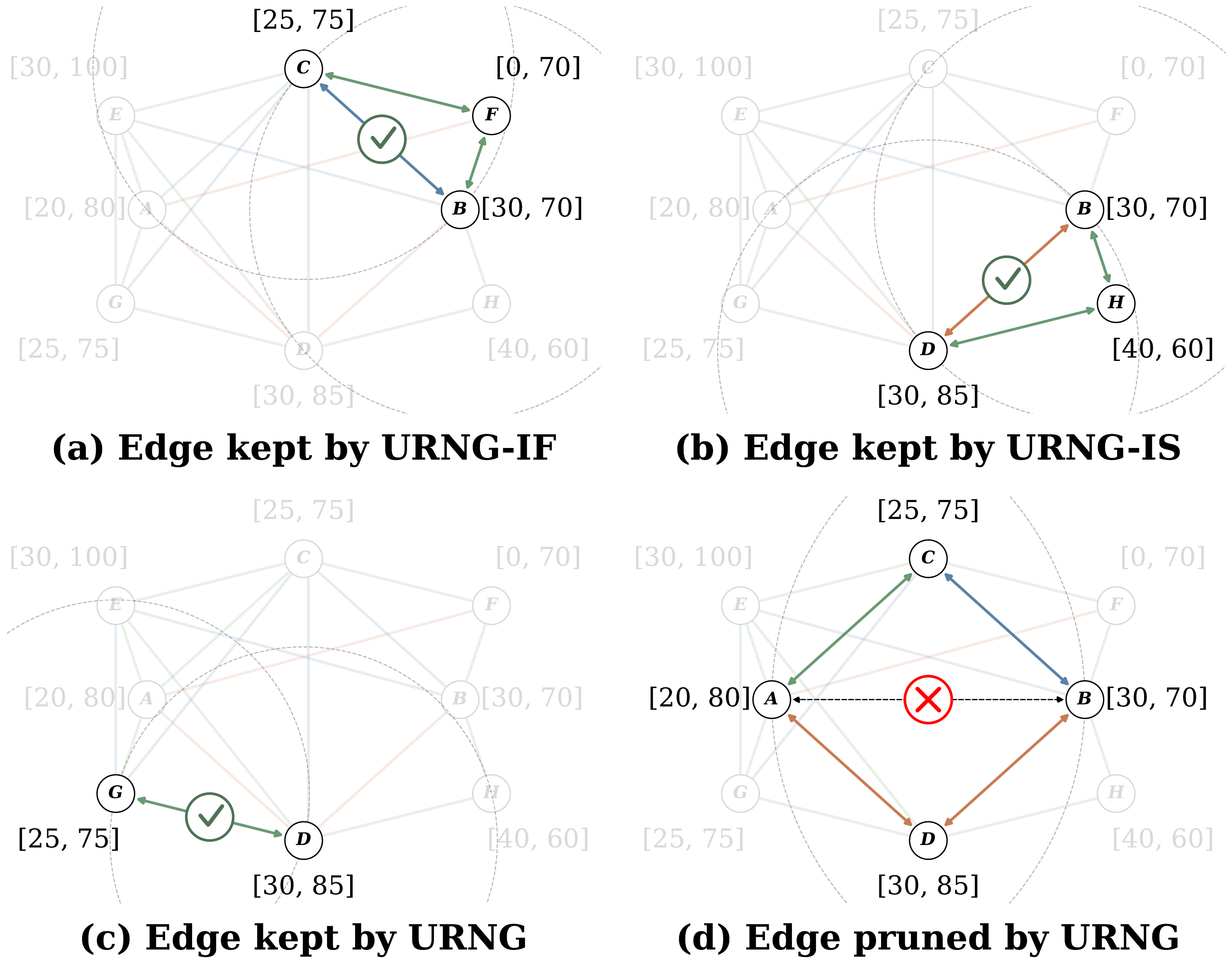} 
    \vspace*{-1em} % 如果觉得图和标题之间太空，可以微调这个值
    \caption{Illustration of the URNG pruning strategy}
    \vspace*{-1.5em}
    \label{fig:urng_pruning}
\end{figure}

% the above definition is expressed through the semantic status of witness edges, it is well-defined. In particular, whether an edge $(u,v)$ remains active depends only on witness edges $(u,w)$ that are strictly shorter than $(u,v)$. Therefore, the semantic state of URNG edges is uniquely determined, and the above definition can equivalently be understood as an inductive definition in ascending order of edge length.

\paragraph{Example 1.}
% TODO: Add the formal example description here.

Figure~\ref{fig:urng_pruning} presents four representative cases that illustrate how edges are retained or pruned in URNG under different interval semantics.

\begin{enumerate}[leftmargin=*,nosep]
    \item {The edge \((B,C)\) is the longest edge in triangle \((B,C,F)\), and would therefore be pruned in a classical RNG. Under interval-aware semantics, however, only the IS constraint $[30,70] \subseteq [0,70]$
    is satisfied. As a result, the IS functionality of \((B,C)\) is pruned, while its IF functionality is preserved.}

    \item {The edge \((B,D)\) is the longest edge in triangle \((B,D,H)\), and would also be removed in a classical RNG. Under interval-aware semantics, only the IF constraint
    $[40,60] \subseteq [30,85]$
    is satisfied. Hence, the IF functionality of \((B,D)\) is pruned, while its IS functionality remains active.}

    \item {For the edge \((D,G)\), there exists no triangle in which it becomes the longest edge. Therefore, this edge cannot be pruned by any valid witness node and remains active under both semantics.}

    \item {The edge \((A,B)\) is the longest edge in triangle \((A,B,C)\), and since
    $
    [25,75] \subseteq [20,80],
    $
    its IF functionality is pruned by edge \((A,C)\). The same edge \((A,B)\) is also the longest edge in triangle \((A,B,D)\), and since
    $
    [30,70] \subseteq [30,85],
    $
    its IS functionality is pruned by edge \((A,D)\). Since both semantic bits are removed, the edge is completely discarded from the graph.}
\end{enumerate}

The above definition unifies the two interval semantics into a single theoretical graph. Compared with defining two independent proximity graphs, URNG %associates each edge with a semantic status code, so that the IF and IS subgraphs can be viewed as two semantic projections of the same underlying graph. This 
offers two advantages: (1) reduced memory cost, as nodes and many edges are shared across the two graphs; and (2) efficient construction and query processing due to the bitmask-based representation, as detailed in Section~\ref{sec:ug}.

\noindent\textbf{Difference from the classical RNG.} 
% As suggested by Figures~\ref{fig:rng_failure} and~\ref{fig:urng_pruning}, 
There is no direct inclusion relationship between RNG and URNG. 
The reason is fundamental: in URNG, pruning depends not only on the geometric condition that an edge is the longest edge of a triangle, but also on whether the witness node remains valid under the corresponding interval semantics.
Consequently, some edges that would be removed in a classical RNG may be retained in URNG due to the lack of a semantically valid witness.
Conversely, such retained edges may later act as new semantic witnesses and prune other edges that would otherwise survive in the classical RNG, as illustrated in Figure ~\ref{fig:rng_failure} and~\ref{fig:urng_pruning}.

However, it is precisely URNG's awareness of interval that allows it to maintain connectivity and high performance across different queries. As illustrated in Figure~\ref{fig:urng_path}, navigation from \(A\) to \(B\) in URNG does not necessarily rely on the direct edge \((A,B)\).
In the IF-induced subgraph, the path \((A,C),(C,B)\) shown in Figure~(a) can replace the direct edge.
In the IS-induced subgraph, the path \((A,D),(D,B)\) shown in Figure~(b) plays an analogous role.
This witness-mediated routing mechanism is precisely what prevents the filtered subgraph from becoming disconnected under interval constraints, and it forms the structural basis for the advantage of URNG in interval-aware search.

\subsection{Theoretical Properties}
\label{subsec:urng_properties}

In this subsection, we introduce two key properties of URNG that enable it to support various interval-aware ANN queries within a single graph structure.

\begin{definition}[Monotonic path]
\label{Mp}
Given a graph \( G=(V,E) \) and a distance metric \( \delta \), a path
$=v_0 \to v_1 \to v_2 \to \cdots \to v_k=t
$
is called a \emph{monotonic path} with respect to the target node \( t \) if $
\delta(v_{i+1}, t) < \delta(v_i, t), \forall\, 0 \le i < k.$
That is, the distance to the target strictly decreases at every step along the path.
\end{definition}

% We now prove the existence of a monotonic path.
We now prove the existence of a monotonic path in each semantic projection of URNG.

% \begin{theorem}[Monotonic Searchability]
% \label{thm:monotonicity}
% \yin{For any dataset ${D}$ and the URNG $G$ constructed over $D$, consider any two nodes $s, t \in V$. A path $s = v_0 \to \cdots \to v_m = t$ along which the distance to $t$ strictly decreases at each step, i.e., $\delta(v_{i+1}, t) < \delta(v_i, t)$ for all $0 \le i < m$, is called a monotonic path. Notably, in a URNG, a monotonic path exists between every pair of nodes.}
% \end{theorem}

\begin{theorem}[Monotonic Searchability]
\label{thm:monotonicity}
{For any dataset $D$ and a 
URNG graph $G$ constructed over $D$, fix one semantic projection $G^\sigma$,
where $\sigma \in \{\mathrm{IF}, \mathrm{IS}\}$. For any two nodes $s, t \in V$, a
monotonic path from $s$ to $t$ exists in $G^\sigma$.}
\end{theorem}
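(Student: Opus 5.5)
The plan is a descent argument in the style of the MRNG proof of \cite{fu2017fast}. We build the path one step at a time, so that each step strictly decreases the distance to $t$ and uses an edge that is active in $G^\sigma$. Because $V$ is finite and the distances strictly decrease, the walk cannot revisit a node, so it must terminate; we will show it can only terminate at $t$. The interval conditions in Definition~\ref{def:urng} play no role in the existence argument. What matters is that every witness must itself be joined to $u$ by an edge whose $\sigma$-bit is $1$.

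\textbf{Step 0: well-definedness.} First I will make the recursive definition of Definition~\ref{def:urng} precise. For a fixed source $u$, the bit $b_\sigma(u,v)$ depends only on bits $b_\sigma(u,w)$ with $\delta(u,w)<\delta(u,v)$. The inequality is strict, so ties cause no circularity. Hence the bits out of $u$ are uniquely determined by induction over the candidates $v$ in ascending order of $\delta(u,v)$. I will treat $G^\sigma$ as the graph of pairs $(u,v)$ with $b_\sigma(u,v)=1$, traversed from $u$ to $v$. This is exactly the orientation in which the witness condition $b_\sigma(u,w)=1$ is stated, so the argument needs no symmetry of the bits.

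\textbf{Step 1: key lemma.} For any $u\neq t$, there is some $x$ with $b_\sigma(u,x)=1$ and $\delta(x,t)<\delta(u,t)$. There are two cases.
\begin{itemize}
\item If $b_\sigma(u,t)=1$, take $x=t$; then $\delta(x,t)=0<\delta(u,t)$.
\item Otherwise, by definition there is a witness $w\in V\setminus\{u,t\}$ with $\delta(u,w)<\delta(u,t)$, $\delta(w,t)<\delta(u,t)$, $b_\sigma(u,w)=1$, and the appropriate interval relation. Take $x=w$. The second inequality gives the required decrease, and the bit condition gives the edge.
\end{itemize}

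\textbf{Step 2: the path.} Starting from $v_0=s$, repeatedly apply Step~1 to get $v_{i+1}$ from $v_i$. Along the way $\delta(v_i,t)$ strictly decreases, so all the $v_i$ are distinct. Since $V$ is finite, the process must stop, and by Step~1 it can stop only when $v_k=t$. The resulting path is monotonic in the sense of Definition~\ref{Mp}. If $s=t$, the trivial path works.

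\textbf{Main obstacle.} I expect the only delicate point to be Step~0. One must argue that the self-referential condition ``$b_\sigma(u,w)=1$'' is well founded, and be consistent about the orientation of edges, for example storing $(u,v)$ as an out-edge of $u$ whenever $b_\sigma(u,v)=1$. My first approach is the ascending-length induction per source node. If the paper intends undirected edges, I would note that the argument only ever uses out-edges of the current node. So, with the edge set read as directed, the theorem holds, and it transfers to the undirected reading whenever the bits are symmetric.
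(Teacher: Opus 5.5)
Your proof is correct and follows the same descent argument as the paper. At each $u \neq t$, either the $\sigma$-edge $(u,t)$ is present, or the defining witness $w$ supplies a $\sigma$-active edge $(u,w)$ with $\delta(w,t)<\delta(u,t)$; finiteness then forces the walk to terminate at $t$. Your Step~0 makes explicit what the paper's proof leaves implicit: the recursive bit definition is well founded by induction on ascending $\delta(u,\cdot)$, and edges must be read as directed out-edges of $u$, because $b_\sigma(u,v)$ need not equal $b_\sigma(v,u)$.
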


% \begin{proof}
% \yin{Without loss of generality, consider the current node $u \neq t$. If $(u, t) \in E$, the condition is satisfied directly. Otherwise, if $(u, t) \notin E$, by Definition~\ref{def:urng}, there exists a sentinel node $w$ such that $(u, w) \in E$ and $\delta(w, t) < \delta(u, t)$. The search can thus safely proceed to $w$. By induction, the distance to $t$ strictly decreases at each step until $t$ is reached.}
% \end{proof}

\begin{proof}
{Without loss of generality, consider the current node $u \ne t$.
If $(u,t) \in E^\sigma$, the condition is satisfied directly. Otherwise,
if $(u,t) \notin E^\sigma$, then by Definition 3.1 under the fixed
semantics $\sigma$, there exists a witness node $w$ such that
$(u,w) \in E^\sigma$ and $\delta(w,t) < \delta(u,t)$. The search can thus
safely proceed from $u$ to $w$. Repeating the same argument, the
distance to $t$ strictly decreases at every step. Since the dataset is
finite, this process must terminate, and the only possible terminal
node is $t$. Therefore, a monotonic path from $s$ to $t$ exists in
$G^\sigma$.}
\end{proof}

\begin{figure}[t!] % 星号表示跨双栏，[t!] 强行置于页顶
    \centering
    % 请根据你的实际文件夹结构替换这里的路径
    \vspace*{-2em}
    \includegraphics[width=0.9\linewidth]{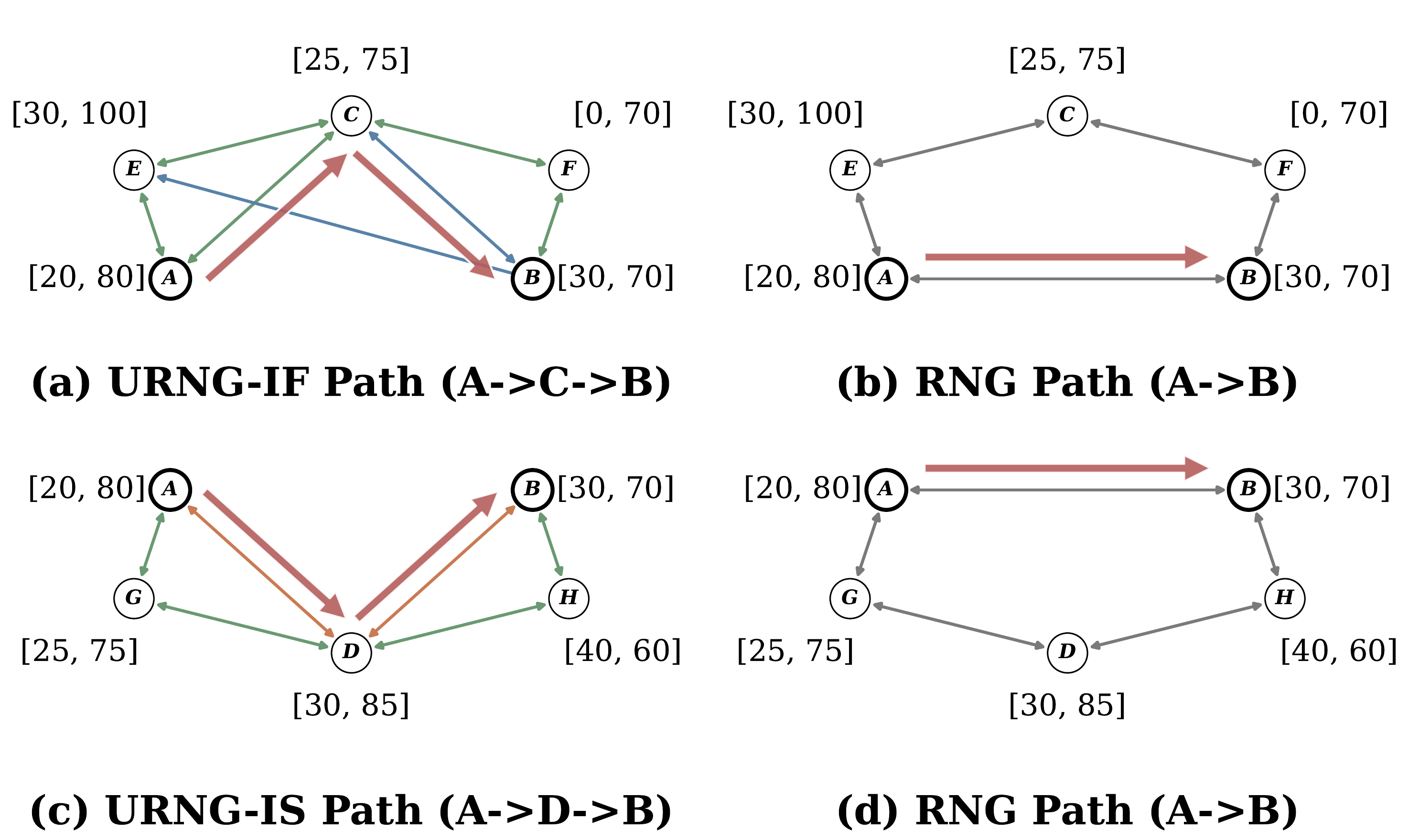} 
    
    \vspace*{-1em} % 如果觉得图和标题之间太空，可以微调这个值
    \caption{Differences between RNG and URNG.}
    \vspace*{-2em}
    \label{fig:urng_path}
\end{figure}

\paragraph{Example 2.}
% TODO: Add the formal example description here.

As shown in Figure~\ref{fig:urng_path}, under IF filtering, the path $A \to C \to B$ exists in Figure (a), while under IS filtering, the path $A \to D \to B$ exists in Figure (c). Figure (b) and (d) further illustrate the direct edge $A \to B$.These examples show that although URNG reorganizes the local graph structure according to interval semantics, it still preserves the core navigability property required for search. 

The monotonicity property provides the theoretical justification for applying graph traversal over any semantic projection of URNG. It is worth noting that classical RNG also belongs to the family of monotonic search networks and thus admits monotonic paths as well ~\cite{fu2017fast}. %As shown in Figure 3, under IF filtering, the path $A \to C \to B$ exists in Figure (a), while under IS filtering, the path $A \to D \to B$ exists in Figure (c). Figures (b) and (d) further illustrate the direct edge $A \to B$.These examples show that although URNG reorganizes the local graph structure according to interval semantics, it still preserves the core navigability property required for search. 

\begin{corollary}
\label{cor:beam_correctness}
Let $G^\sigma = (V, E^\sigma)$ be a fixed semantic projection of the URNG constructed on dataset $D$ under distance metric $\delta$,where $\sigma \in \{\mathrm{IF}, \mathrm{IS}\}$. For any target node $t \in V$, beam search using $t.v$ as the query vector can always reach $t$ regardless of the choice of the entry node.
\end{corollary}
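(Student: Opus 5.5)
The plan is to reduce the corollary to Theorem~\ref{thm:monotonicity} by showing that beam search with query vector $t.v$ behaves at least as well as a greedy walk along monotonic edges. We fix the convention that beam search keeps a candidate pool of width $L \ge 1$. At each step it expands the closest unexpanded candidate. It terminates only when every candidate in the pool has been expanded. We argue by contradiction: assume the search terminates without ever inserting $t$ into the pool.

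\textbf{Key step.} Let $u$ be the node in the final pool closest to $t.v$. Since $t$ is never reached, $u \neq t$, so $\delta(u,t) > 0$. The closest candidate is always expanded before termination, so $u$ was expanded. By the one-step argument in the proof of Theorem~\ref{thm:monotonicity}, either $(u,t) \in E^\sigma$ or there is a witness $w$ with $(u,w) \in E^\sigma$ and $\delta(w,t) < \delta(u,t)$. In the first case, expanding $u$ evaluates $t$, whose distance $0$ is strictly smaller than that of every pool member. Hence $t$ enters the pool and is never evicted, which contradicts the assumption. In the second case, $w$ was evaluated when $u$ was expanded. Because $\delta(w,t) < \delta(u,t)$ and the pool always retains the best candidates seen, $w$ (or something at least as close) must be in the final pool. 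This contradicts the minimality of $u$. Either way the assumption fails, so $t$ is reached, whatever the entry node.

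\textbf{Main obstacle.} The delicate point is making the statement that the pool retains the best candidates seen precise. Beam search only ever evicts the farthest element, and the best element found so far has rank $1 \le L$, so it is never evicted. I would therefore phrase the argument in terms of the best distance found so far, $d^\ast$, rather than the final pool. The invariant is that $d^\ast$ is non-increasing, and that the node achieving $d^\ast$ is eventually expanded. Expanding that node either yields $t$ or strictly decreases $d^\ast$. Since $V$ is finite, $d^\ast$ can decrease only finitely many times, so it must reach $0$, that is, $t$ is found.

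Two smaller technical points need care. First, ties in distance: here I will assume, as the definition of $E^\sigma$ implicitly does, that the witness gives a strict decrease. Second, the witness edge $(u,w)$ must lie in $E^\sigma$ itself and not merely in $E$. This holds because the definition requires $b_\sigma(u,w)=1$, so the search on the projection $G^\sigma$ traverses exactly these edges.
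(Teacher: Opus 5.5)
Your proof is correct and rests on the same ingredients as the paper's: the URNG witness rule gives every $u \neq t$ a strictly closer out-neighbour in $E^\sigma$; beam search never discards the best candidate it has found; and finiteness forces that best distance down to $0$. Your version is tighter than the paper's, which anchors the argument on one monotonic path from the entry node $s$ that beam search need not follow, whereas you apply the one-step witness property directly at the current best node, which is what the argument actually needs.
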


\begin{proof}
Without loss of generality, let $s$ be the entry node. By
Theorem 3.3, there exists a monotonic path
$s = v_0 \to v_1 \to \cdots \to v_m = t$ in $G^\sigma$ such that
$\delta(v_{i+1}, t) < \delta(v_i, t)$ for all $0 \le i < m$. Since the query
vector is exactly $t.v$, we have
$\delta(v_{i+1}, q.v) < \delta(v_i, q.v)$ at every step along the same path.
Therefore, the search process always has access to at least one candidate
that is closer to the query target than the current node. Since beam
search always keeps the closest candidates and the dataset is finite, the
search process eventually reaches $t$.
\end{proof}

% In addition to monotonic searchability, URNG satisfies another crucial property, namely \emph{structural heredity}.
% This property states that filtering a globally constructed URNG and directly rebuilding URNG on the filtered node set produce equivalent graph structures.
% This is precisely the property that classical RNG lacks in interval-aware settings. In the following theorem, we fix one query semantics at a time. Accordingly, $V_I$ denotes the query-valid node set under the chosen semantics, and $G[I]$ denotes the induced subgraph relevant to that semantics. That is, for IFANN queries, $V_I$ consists of nodes whose intervals are contained in $I$, while for ISANN queries, $V_I$ consists of nodes whose intervals contain $I$. We have the following theorem.

In addition to monotonic searchability, URNG satisfies another crucial property, namely \emph{structural heredity}. This property states that filtering a globally constructed URNG and directly rebuilding URNG on the filtered node set produce equivalent graph structures. This is precisely the property that classical RNG lacks in interval-aware settings. In the following theorem, we fix one query semantics at a time. Accordingly, $V_I$ denotes the query-valid node set under the chosen semantics, and $G[I]$ denotes the induced subgraph relevant to that semantics. We then have the following theorem.

\begin{theorem}[Structural Heredity]
\label{thm:heredity}
Let \(G=(V,E)\) be the URNG constructed on dataset \(D\) under distance metric \(\delta\).
For any query
$
q=\langle v, I=[a_l,a_r], k \rangle,
$
let \(G[I]=(V_I,E_I)\) denote the induced subgraph of \(G\), and let \(G'=(V',E')\) denote the URNG directly constructed on the node set \(V_I\).
Then
$
G[I] \equiv G',
$
that is, the two graphs are structurally equivalent.
\end{theorem}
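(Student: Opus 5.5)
Fix a semantics $\sigma\in\{\mathrm{IF},\mathrm{IS}\}$ and write $b=b_\sigma$ for the bit of the URNG $G$ on $D$ and $b'$ for the corresponding bit of the URNG $G'$ built on $V_I$. The claim $G[I]\equiv G'$ amounts to showing that for every pair $u,v\in V_I$ we have $b(u,v)=b'(u,v)$. Definition~\ref{def:urng} is recursive, because the bit of $(u,v)$ depends on the bits of witness edges $(u,w)$ with $\delta(u,w)<\delta(u,v)$. I will therefore argue by strong induction on edge length, processing pairs $u,v\in V_I$ in ascending order of $\delta(u,v)$ and assuming $b(u,w)=b'(u,w)$ for all $u,w\in V_I$ with $\delta(u,w)<\delta(u,v)$. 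The same length ordering shows that both definitions are well-founded.

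The key lemma is a \emph{witness closure} property. If $u,v\in V_I$ and $w\in V$ satisfies the interval condition of $\sigma$ relative to $u,v$, then $w\in V_I$. For IF, $V_I=\{x: I_x\subseteq I\}$, and $I_u,I_v\subseteq I$ gives $I_u\cup I_v\subseteq I$ in the hull sense of the footnote, since $I$ is an interval. Hence $I_w\subseteq I_u\cup I_v\subseteq I$. For IS, $V_I=\{x: I\subseteq I_x\}$, and $I\subseteq I_u\cap I_v\subseteq I_w$. So in both cases every geometrically and semantically admissible witness for a pair in $V_I$ lies in $V_I$.

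With the lemma the induction step is direct. Take $u,v\in V_I$. If $b(u,v)=0$ in $G$, there is a witness $w\in V$ with the two strict distance inequalities, the interval condition, and $b(u,w)=1$. The lemma gives $w\in V_I$. Since $\delta(u,w)<\delta(u,v)$, the induction hypothesis gives $b'(u,w)=1$. The distance and interval conditions do not depend on the ambient set, so $w$ witnesses $b'(u,v)=0$ in $G'$. Conversely, a witness in $G'$ lies in $V_I\subseteq V$ and, again by the induction hypothesis, has $b(u,w)=1$, so it witnesses $b(u,v)=0$. Hence $b(u,v)=b'(u,v)$. The base case, the shortest pairs, is covered by the same argument, because no witness can exist in either graph.

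The main obstacle is getting the closure lemma exactly right for the interval operations. For IF one must use that the hull $I_u\cup I_v$ stays inside $I$, which holds only because the query range is convex. For IS one must handle an empty $I_u\cap I_v$; this case cannot arise because both intervals contain $I$. A secondary technical point is ties in $\delta$. Because all inequalities are strict, I would order pairs by length and group equal lengths together. A pair's bit then only references strictly shorter edges, and the induction remains valid.
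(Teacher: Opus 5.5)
Your proof is correct and takes essentially the same route as the paper. Both argue by induction on edge length, and both rest on the observation that the IF/IS witness condition for a pair in $V_I$ forces the witness itself into $V_I$, so witnesses transfer between $G[I]$ and $G'$ through the induction hypothesis. Your version is somewhat tighter than the paper's: you invoke the induction hypothesis explicitly in the direction from $G'$ back to $G$, and you address ties, the convexity of $I$ in the IF hull step, and the nonemptiness of $I_u\cap I_v$ for IS.
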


\begin{proof}
Since \(V_I = V'\), it suffices to prove \(E_I = E'\). We use induction on edge length.

For the base case, let \((u,v)\) be the shortest candidate edge with both endpoints in \(V_I\). Suppose \((u,v)\) were pruned by some witness node \(w \in V\). Then \((u,w)\) must be shorter than \((u,v)\). Moreover, the semantic condition forces \(w \in V_I\): for IFANN, \([w.l,w.r] \subseteq I\); for ISANN, \(I \subseteq [w.l,w.r]\). Thus \((u,w)\) is a shorter edge inside \(V_I\), contradicting the choice of \((u,v)\). Hence \((u,v)\) cannot be pruned, and its semantic state is identical in \(E_I\) and \(E'\).

Now assume that all edges shorter than \((u,v)\) have the same semantic states in \(E_I\) and \(E'\). If a semantic bit of \((u,v)\) remains active in \(E_I\), then no valid witness with the same semantic status exists in \(V_I\); since \(E'\) is constructed on the same node set \(V_I\), no such witness can appear in \(E'\), so the bit also remains active there.

Conversely, if a semantic bit of \((u,v)\) is inactive in \(E_I\), then there exists a witness edge \((u,w)\) satisfying the corresponding pruning condition. By the same argument as above, the semantic condition implies \(w \in V_I\). Since \(\mathrm{dis}(u,w) < \mathrm{dis}(u,v)\), the induction hypothesis guarantees that \((u,w)\) has the same semantic state in \(E'\), and therefore \((u,v)\) is also pruned in \(E'\).

Thus \((u,v)\) has the same semantic state in both graphs. By induction, all edges do, and hence \(E_I = E'\).
\end{proof}

The heredity property guarantees that, %has an immediate algorithmic implication.
for any interval-aware ANN query, it is unnecessary to explicitly reconstruct the graph structure from scratch on the filtered node set. 
Instead, one can directly search over the corresponding induced subgraph, and the result is equivalent to that of rebuilding URNG after filtering. 
This is exactly the structural advantage that URNG provides over the classical RNG in interval-aware search.

\begin{figure}[t!] % 星号表示跨双栏，[t!] 强行置于页顶
    \centering
    % 请根据你的实际文件夹结构替换这里的路径
    \vspace*{-2em}    
    \includegraphics[width=0.9\linewidth]{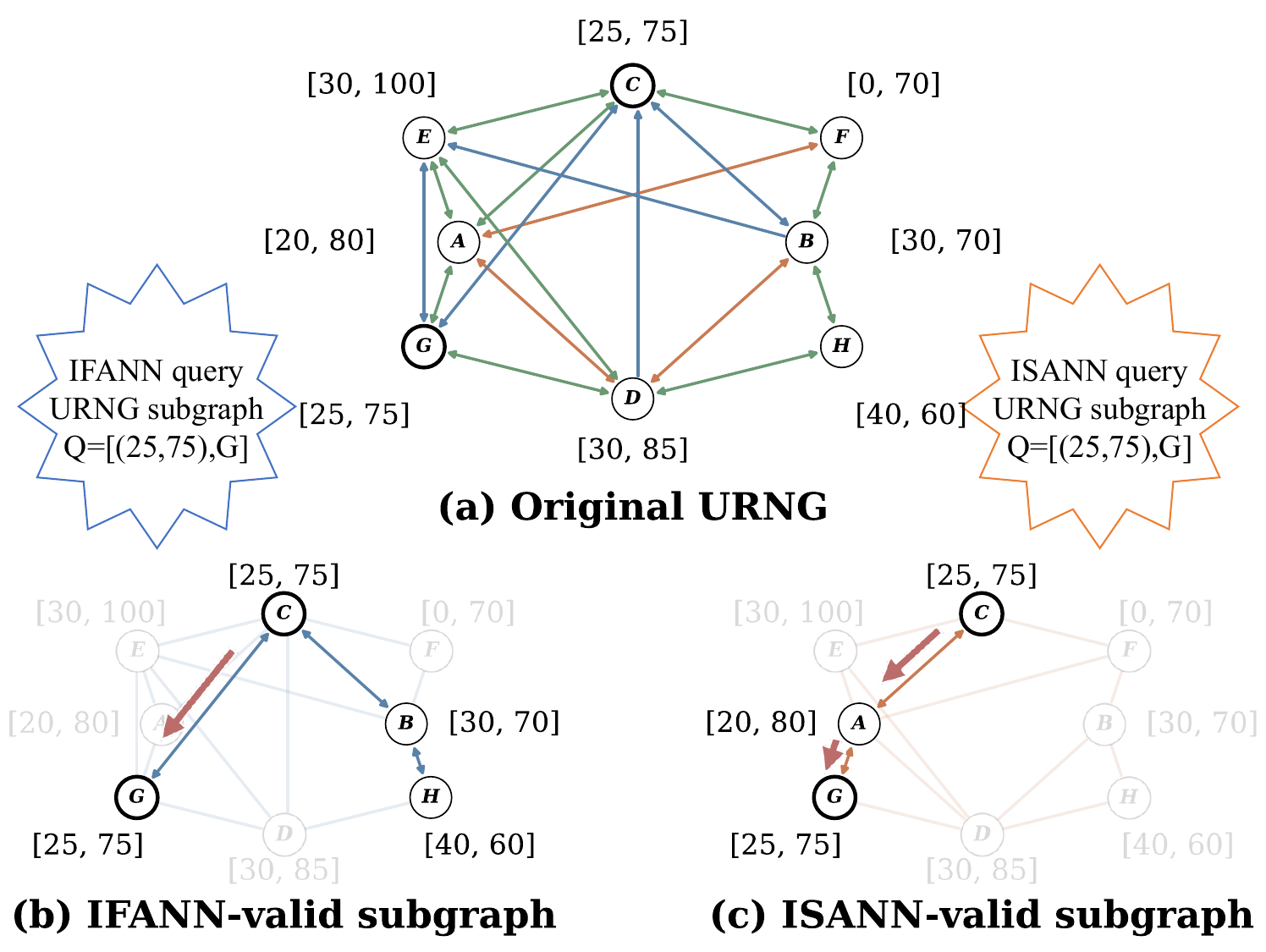}     
    \vspace*{-1em} % 如果觉得图和标题之间太空，可以微调这个值
    \caption{Illustration of the hereditary property of URNG.}
    \vspace*{-2em}
    \label{fig:urng_connected}
\end{figure}

Figure~\ref{fig:urng_connected} illustrates the induced URNG subgraphs for the example shown in Figure~\ref{fig:rng_failure}.
Under the interval filter \([25,75]\), both the IF-induced and IS-induced subgraphs remain well connected and navigable, thereby supporting the intended search process and returning the desired results.

\subsection{Complexity Analysis}
\label{subsec:urng_complexity}

We next summarize the complexity of URNG under a uniform interval model, where interval endpoints are generated independently and uniformly. Detailed proofs are deferred to the appendix~\cite{ug_fullversion}.
%Appendix ~\ref{app:complexity_proofs}.

\begin{theorem}
\label{thm:urng_search_complexity}
Under the assumption that interval attributes are independently and uniformly distributed, the expected complexity of interval-aware beam search on URNG is \(O(n^{1/d}\log n)\). 
\end{theorem}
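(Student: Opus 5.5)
The proof mirrors the standard MRNG/NSG search-complexity analysis of Fu et al.~\cite{fu2017fast}: the expected cost of greedy or beam search is (expected length of a monotonic path) $\times$ (expected out-degree of the nodes visited). We do not analyze the interval-aware search on the full graph directly. Instead, we use Theorem~\ref{thm:heredity}: for a query with interval $I$ under semantics $\sigma$, search on the induced subgraph $G^\sigma[I]$ is equivalent to search on the URNG built directly on $V_I$. So it suffices to bound two quantities for a URNG on an arbitrary node set $V_I$ with $n_I \le n$ points. The first is the expected monotonic path length. The second is the expected degree of a single semantic projection.

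\textbf{Path length.} By Theorem~\ref{thm:monotonicity} and Corollary~\ref{cor:beam_correctness}, from any entry node there is a monotonic path to the target in $G^\sigma[I]$, and beam search follows a path at least as good. Under the usual random-point assumption (points uniform in a bounded $d$-dimensional region, independent of the intervals), each step of a monotonic path shrinks the distance to the target. A shell/volume argument then shows that the expected path length is $O(n_I^{1/d}\log n_I)$. We would reuse the MRNG lemma essentially verbatim: it depends only on monotonicity and on points being uniform in space. The interval attributes are independent of the vectors, so restricting to $V_I$ is just an independent subsampling and does not alter the spatial distribution.

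\textbf{Degree bound (main obstacle).} In classical RNG/MRNG the degree is $O(1)$ in fixed dimension: no two kept neighbors of $u$ can lie within a fixed angle $60^\circ$ of each other, so a packing argument bounds the degree. In URNG a witness $w$ prunes $(u,v)$ only if the interval condition also holds. For IF this is $I_w\subseteq I_u\cup I_v$; for IS it is $I_u\cap I_v\subseteq I_w$. Two neighbors in the same angular cone may therefore both survive. The plan is to partition the space around $u$ into $O(1)$ cones of angle $<60^\circ$. Within a single cone, sort the surviving neighbors by distance, $v_1,v_2,\dots$. Among close-in neighbors, $v_j$ is a geometric witness for $v_i$ with $j<i$, so $v_i$ survives only if no such earlier neighbor satisfies the interval condition. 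With endpoints i.i.d.\ uniform, we would show that the probability $v_i$ survives decays as in a records argument. For IF, the span $I_u\cup I_v$ must not contain any of the earlier intervals; for IS, the earlier intervals must all fail to contain $I_u\cap I_v$. This gives an expected count per cone of $O(1)$, or at worst $O(\log n)$ as a harmonic sum. The cascading condition $b_\sigma(u,w)=1$ complicates this, because only surviving witnesses count. We would handle it by induction on distance order, which the surviving $v_j$ already encode.

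\textbf{Combining.} Multiplying the per-step cost (expected degree, a constant under the uniform model) by the path length gives $O(n^{1/d}\log n)$ when the per-cone bound is constant. If the records argument only yields a harmonic bound, we would absorb it by conditioning on the first intervals, whose containment probability is a constant. This is the step we expect to need the most care.
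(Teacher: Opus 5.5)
Your architecture is the same as the paper's. You take the expected monotonic-path length from the MSNET analysis of Fu et al., multiply it by the expected degree of a semantic projection, and bound that degree cone by cone (half-angle $30^\circ$) under the uniform interval model. Passing to $V_I$ via Theorem~\ref{thm:heredity} is an addition the paper does not make. For IS, the intervals restricted to $V_I$ are no longer i.i.d.\ uniform, so the degree computation should be done on the full projection $G^\sigma$, whose degrees dominate those of the induced subgraph, as the paper does. However, your plan does not actually secure either of the two quantitative steps.

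\textbf{The degree bound.} You rightly call this the crux, but it is left as a guess, and your fallback does not work. A records-type $1/t$ decay is the wrong model, and a harmonic per-cone count would give $O(n^{1/d}\log^2 n)$, not the theorem. ``Conditioning on the first intervals, whose containment probability is a constant'' fails because that probability is not bounded below: it is $p^2$ for a span of length $p$, and short spans are exactly where the difficulty lies. The missing idea is to condition on the size of the semantic region of the $t$-th candidate and let its density cancel the blow-up of the geometric series. For IF, $p=|I_u\cup I_{v_t}|$ is the range of four i.i.d.\ uniforms, with density $12p^2(1-p)$. Each earlier interval lies inside with probability $p^2$, so $\Pr(Y_t=1\mid p)=(1-p^2)^{t-1}$ and
\[
\mathbb{E}[K]=\int_0^1 \frac{1-(1-p^2)^m}{p^2}\,12p^2(1-p)\,dp\le 12\int_0^1(1-p)\,dp=6,
\]
uniformly in $m$. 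Equivalently, $\Pr(Y_t=1)=\Theta(t^{-3/2})$, which is summable, so it decays strictly faster than records. For IS, the same computation with $[L,R]=I_u\cap I_{v_t}$, density $16L(1-R)$ and covering probability $2L(1-R)$ gives $13/3$.

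Note that the paper obtains this product formula by counting every earlier candidate in the cone as a witness, i.e.\ without tracking $b_\sigma(u,w)=1$. Your proposal to use only surviving earlier neighbors is the logically correct direction for an upper bound. But then the set of available witnesses depends on the intervals themselves, and the independent-trials formula above no longer applies. So your ``induction on distance order'' would need a genuinely new argument; it is not automatic.

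\textbf{The path length.} This step overclaims. The MSNET lemma you intend to reuse verbatim gives $O(n^{1/d}\log n/\Delta r)$, where $\Delta r$ is the minimum edge-length gap over non-isosceles triangles; the shell/volume argument does not control $\Delta r$. The paper needs the extra spacing assumption $\Delta r\approx O(n^{-\epsilon/d})$ and obtains $O(n^{(1+\epsilon)/d}\log n)$. Only then does it read this as $O(n^{1/d}\log n)$ for $\epsilon\ll d$. Your proposal drops this factor without comment, so as written it asserts a stronger path-length bound than the cited lemma supplies.
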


\begin{theorem}
\label{thm:urng_index_size}
The index size of URNG is upper bounded by \(O(C_{\mathrm{urng}}S_r)\\=O(S_r)\), where \(S_r\) denotes the index size of the corresponding RNG and \(C_{\mathrm{urng}}\) is a constant independent of \(n\).
\end{theorem}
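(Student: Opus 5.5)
We bound the out-degree of each node in each semantic projection separately and then sum: since every URNG edge carries at least one active bit, $|E| \le |E^{\mathrm{IF}}| + |E^{\mathrm{IS}}|$. It therefore suffices to show that, for a fixed $\sigma\in\{\mathrm{IF},\mathrm{IS}\}$, the expected out-degree of a node $u$ in $G^\sigma$ is at most a constant $c_\sigma$ times its RNG degree. Then $C_{\mathrm{urng}} = c_{\mathrm{IF}}+c_{\mathrm{IS}}$, and the bitmask adds only $O(1)$ bits per edge.

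\textbf{Step 1: angular sectors.} Fix $u$ and partition the space around $u$ into cones of angular aperture less than $60^\circ$. As in the classical RNG/MRNG degree argument, any two neighbors $v,w$ of $u$ in the same cone with $\delta(u,w)<\delta(u,v)$ satisfy $\delta(v,w)<\delta(u,v)$, so $w$ is a geometric witness for $(u,v)$. The number of cones $K_d$ depends only on $d$, and the RNG degree is also governed by $K_d$. Hence it suffices to bound, within a single cone, the expected number of candidates $v$ that survive under $\sigma$.

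\textbf{Step 2: survival within a cone.} Order the points of a cone by distance to $u$ as $w_1,w_2,\dots$. The point $w_j$ keeps its $\sigma$-bit only if none of the earlier surviving $w_i$ ($i<j$, with $b_\sigma(u,w_i)=1$) satisfies the interval condition. For IF the condition is $I_{w_i}\subseteq I_u\cup I_{w_j}$; for IS it is $I_u\cap I_{w_j}\subseteq I_{w_i}$. The survivors therefore form a chain of ``records'' in a partial order on intervals. Under the uniform independent endpoint model, the event that a fresh interval is not dominated by any of the previous survivors has probability decaying like a record probability in a 2-dimensional dominance order.

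\textbf{Step 3: bounding the expected number of records.} We must show the expected number of survivors per cone is $O(1)$ rather than growing with the number of cone points. This is the main obstacle. Naive 2-D record counts grow like $\log^2 m$ and are not constant. Two features have to be exploited.

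First, the anchor $I_u$ enters the dominance condition through $\cup$ or $\cap$, and this collapses one degree of freedom. For IF, any $w_i$ with $I_{w_i}\subseteq I_u$ prunes everything after it, and this event has constant probability $p(I_u)>0$. So the number of survivors before the first such $w_i$ is geometric with mean $1/p$. For IS, the symmetric event is $I_{w_i}\supseteq I_u$.

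Second, after the first such point every later candidate is pruned, since a surviving witness exists. Averaging $1/p(I_u)$ over uniform $I_u$ may diverge near degenerate intervals. To handle this, I would truncate: either bound the expectation by integrating $\min(m,1/p)$, or argue that the combined events (for IF: $I_{w_i}\subseteq I_u\cup I_{w_j}$, which has probability bounded below uniformly in $I_{w_j}$ once $I_u$ is fixed) give a uniform constant. If this fails, I would fall back to a bound of the form $c_\sigma=O(\log)$ in the cone size and restate the constant as depending only on the interval distribution.

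Finally, summing the per-cone bound over the $K_d$ cones and comparing with the RNG degree yields $|E^\sigma|\le c_\sigma S_r$. Combining the two projections gives $O(C_{\mathrm{urng}}S_r)=O(S_r)$.
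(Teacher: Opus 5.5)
\textbf{There is a genuine gap, in Step~3.} Your outer structure is the paper's: cones of aperture below $60^\circ$ around $u$, distance order inside each cone, a bound on the expected number of survivors per cone for each semantics, then summation over cones and the two projections. But Step~3 carries all of the quantitative content, and the routes you sketch do not give a constant.

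\emph{Why your routes fail.} Waiting for the first cone point with $I_{w_i}\subseteq I_u$ (IF), or $I_{w_i}\supseteq I_u=[a,b]$ (IS), bounds the count by $\sum_{t\le m}(1-p)^{t-1}$, with $p=|I_u|^2$ or $p=2a(1-b)$ respectively. Averaged over a uniform $I_u$, this is of order $\sqrt m$ for IF (asymptotically $2\sqrt{\pi m}$) and of order $\log^2 m$ for IS. Truncating at $\min(m,1/p)$ is therefore not enough. Your second option is the same bound in disguise: the lower bound on $\Pr[I_{w_i}\subseteq I_u\cup I_{w_j}]$ that holds uniformly in $I_{w_j}$ is exactly $|I_u|^2$. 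The $O(\log)$ fallback would not prove the theorem as stated.

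\emph{The missing idea.} Do not freeze $I_u$. For each position $t$, average jointly over $(I_u,I_{w_t})$, so that you keep the extra pruning room contributed by the candidate $I_{w_t}$ itself.

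Restrict to witnesses inside the cone, as the sector argument does. Then the interval conditions are transitive. Suppose $w_i$ qualifies for $w_t$ but was itself pruned by an active $w_k$. For IF this gives $I_{w_k}\subseteq I_u\cup I_{w_i}\subseteq I_u\cup I_{w_t}$. For IS it gives $I_u\cap I_{w_t}\subseteq I_u\cap I_{w_i}\subseteq I_{w_k}$. So your ``records among survivors'' reduce to ``no earlier cone point qualifies'', and
$\Pr(Y_t=1\mid I_u,I_{w_t})=(1-\mu_t)^{t-1}$.
Here $\mu_t=D_t^2$ for IF, where $D_t$ is the length of the hull $I_u\cup I_{w_t}$. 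For IS, $\mu_t=2L(1-R)$ with $[L,R]=I_u\cap I_{w_t}$.

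Summing over $t$ gives $\mathbb{E}[K]\le\mathbb{E}[1/\mu]$, and the densities now cancel the singularity:
\begin{itemize}
\item \textbf{IF.} $D$ is the range of four i.i.d.\ uniform endpoints, with density $12p^2(1-p)$. The factor $p^2$ cancels $1/p^2$, so $\mathbb{E}[K]\le 12\int_0^1(1-p)\,dp=6$.
\item \textbf{IS.} A nonempty $[L,R]$ has density $16L(1-R)$, which cancels $1/(2L(1-R))$. The count is at most $8\cdot\frac12=4$, plus $\frac13$ for the empty case.
\end{itemize}
This is exactly the paper's computation, and it yields $C_{\mathrm{urng}}=6+\frac{13}{3}$.

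The anchor $I_u$ does not help through kill-all witnesses. It helps because the hull of two independent random intervals has length at most $p$ only with probability $O(p^3)$, whereas a single interval is that short with probability $\Theta(p)$.
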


These results show that, under the uniform interval model, URNG preserves the navigability of monotonic search graphs while incurring only a constant-factor overhead over the corresponding RNG.

\begin{theorem}
\label{thm:exact_build}
The time complexity of exact URNG construction is \(O(n^3)\).
\end{theorem}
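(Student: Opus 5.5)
The plan is to give an explicit construction algorithm and bound its running time. The algorithm follows the inductive structure already implicit in Definition~\ref{def:urng}: whether a bit of $(u,v)$ is active depends only on the same bit of strictly shorter edges $(u,w)$. The first step is to fix the order of evaluation. For each node $u$, sort the other $n-1$ nodes by $\delta(u,\cdot)$ in ascending order. This costs $O(n\log n)$ per node after computing all $O(n^2)$ pairwise distances, so $O(n^2\log n)$ in total, which is within $O(n^3)$.

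The second step processes, for each fixed $u$, the candidates $v$ in that ascending order. For each $v$ we decide both bits. We set $b_{\mathrm{IF}}(u,v)=0$ if some $w$ already processed satisfies $\delta(u,w)<\delta(u,v)$, $\delta(v,w)<\delta(u,v)$, $I_w\subseteq I_u\cup I_v$ and $b_{\mathrm{IF}}(u,w)=1$; otherwise we set it to $1$. The IS bit is handled the same way with the test $I_u\cap I_v\subseteq I_w$.

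When $v$ is examined, every $w$ with $\delta(u,w)<\delta(u,v)$ has already been finalized. This holds by induction on the rank of $v$ in $u$'s sorted list, and it is exactly the well-foundedness argument behind Theorem~\ref{thm:heredity}. So the algorithm computes precisely the graph of Definition~\ref{def:urng}.

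Ties in distance need care. Because the definition uses strict inequalities, nodes tied with $v$ are never witnesses, so the algorithm processes each block of tied candidates together against the strictly closer prefix.

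For the cost, each pair $(u,v)$ scans at most $n-2$ witnesses. Each witness test needs one precomputed distance lookup, two interval comparisons and one bit lookup, all $O(1)$. The scan over pairs is therefore $O(n)\cdot O(n^2)=O(n^3)$. Adding the $O(n^2 d)$ distance precomputation, treating $d$ as a constant as in the earlier complexity statements, gives $O(n^3)$ overall.

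The main subtlety is not the counting. It is justifying that the definition is well founded and that the local, per-$u$ ordering suffices, since the witness condition refers to $b(u,w)$ as seen from $u$. Two points need arguing. First, the bits of $(u,w)$ are evaluated from $u$'s perspective, so $u$'s list is processed consistently. Second, the edge is treated as undirected in the paper, so the result must not depend on which endpoint is the source. If symmetry is not automatic, I would treat the graph as directed, which matches how the definition is phrased with witness edge $(u,w)$. That leaves the cubic bound unchanged, since it only doubles the work.
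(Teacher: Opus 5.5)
Your proposal is correct and follows essentially the same route as the paper: sort the remaining nodes by distance from each node, then for each of the $O(n^2)$ pairs scan all $O(n)$ candidate witnesses with an $O(1)$ test, giving $O(n^3)$ once the $O(n^2 d)$ distance and sorting cost is absorbed (the paper does this by assuming $n \gg d$). Your extra remarks on well-foundedness, ties, and the directed reading of $b(u,w)$ are more careful than the paper's sketch, but they do not change the argument or the bound.
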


This $O(n^3)$ exact-construction cost further motivates the practical UG index introduced in Section~4.
% 

% \subsection{Relationship to RRNG}
\noindent\textbf{Difference from the (RRNG)~\cite{zou2026rnsgrangeawaregraphindex}.} The proposed URNG is closely related to the Range-aware Relative Neighborhood Graph (RRNG)~\cite{zou2026rnsgrangeawaregraphindex}, which was originally developed for RFANN search, since both aim to preserve monotonic searchability and structural heredity under attribute-induced filtering. However, URNG substantially generalizes RRNG in both the data model and query semantics: while RRNG assumes scalar attributes and RFANN  queries, URNG handles interval attributes and supports multiple interval-inclusion semantics, including ISANN, IFANN, RSANN and RFANN search. Indeed, RRNG can be viewed as a special case of URNG when each object interval degenerates to a point, i.e., $I_o=[o.a,o.a]$, and only the IF projection is considered.

\section{The Unified Interval-aware Graph-based Index}
\label{sec:ug}

\begin{algorithm}[t]
\caption{UG Initial Candidate Generation}
\label{alg:ug_cand}
\SetKwInput{Input}{Input}
\SetKwInput{Output}{Output}
\SetKwProg{Fn}{Function}{}{end}

\Input{Dataset $V$, spatial budget $ef_{spatial}$, attribute budget $ef_{attribute}$}
\Output{Initial candidate sets $\mathcal{C}$ for all nodes}

\Fn{\textsc{GenerateCandidates}($V, ef_{spatial}, ef_{attribute}$)}{
    $\mathcal{C}_{spa} \leftarrow \textsc{NNDescent}(V, ef_{spatial})$\;
    
    \ForEach{$u \in V$}{
        $\mathcal{C}_{attr}(u) \leftarrow \emptyset$\;
    }
    
    $Keys \leftarrow \{l, r, mid, len\}$\;
    \ForEach{$k \in Keys$}{
        $V_k \leftarrow$ sort $V$ in ascending order of attribute $k$\;
        \ForEach{$u \in V_k$}{
            $\mathcal{N}_k(u) \leftarrow$ fetch $\frac{ef_{attribute}}{8}$ adjacent nodes from both sides of $u$ in $V_k$\;
            $\mathcal{C}_{attr}(u) \leftarrow \mathcal{C}_{attr}(u) \cup \mathcal{N}_k(u)$\;
        }
    }
    
    \ForEach{$u \in V$}{
        $\mathcal{C}(u) \leftarrow \mathcal{C}_{spa}(u) \cup \mathcal{C}_{attr}(u)$\;
        $\mathcal{C}(u) \leftarrow \text{Unique}(\mathcal{C}(u)) \setminus \{u\}$\;
    }
    
    \KwRet $\mathcal{C}$\;
}
\end{algorithm}

In Section~3, we introduced the unified theory of the \emph{Unified Interval-aware Relative Neighborhood Graph} (URNG). %The central insight is that if the witness node of an edge remains valid under the query semantics, then the resulting graph simultaneously preserves monotonic searchability and the heredity of query-induced subgraphs. These properties make URNG a principled graph structure for interval-constrained approximate nearest neighbor search. 
However, constructing an exact URNG requires explicitly evaluating witness relationships on the complete graph, where % each of the $O(n^2)$ candidate edges may need to be checked against $O(n)$ possible witness nodes. This 
results in an $O(n^3)$ %worst-case 
construction cost %, which is prohibitive for large-scale datasets 
and %therefore 
prevents %exact
URNG from being %directly 
applied in practice. To address this issue, we propose the \emph{Unified Interval-aware Graph-based index} (UG), a practical graph index that approximates URNG effectively while preserving its structural advantages.  %as much as possible. 

In particular, UG is built in two stages. We first generate a high-quality candidate set for each node, and then apply an iterative unified pruning strategy to derive the final graph. In addition, we design an interval-aware beam search algorithm over UG to support efficient navigation for both \emph{IFANN queries} and \emph{ISANN queries}. %A key design choice is that UG does not maintain two separate physical graphs for the two query types. Instead, it stores all edges in a single graph and associates each edge with a semantic bitmask indicating under which query semantics the edge is active.

% \vspace*{-1em}

\subsection{Candidate Generation}
\label{subsec:ug_candidate}

The computational bottleneck of exact URNG construction lies in the size of the candidate set. In principle, exact pruning would require considering all nodes as potential neighbors. However, the analysis in Section~3 suggests that the edges that ultimately survive pruning are highly selective: an edge is retained not merely because of geometric proximity, but rather due to the absence of any semantically valid witness that can eliminate it under the query semantics. This observation motivates us to replace the complete candidate set with a high-quality approximate candidate pool.

Algorithm~\ref{alg:ug_cand} constructs the initial candidate pool for each node by combining spatial proximity and interval proximity. 
It first applies \textsc{NNDescent} to the whole dataset with budget $ef_{spatial}$, obtaining a spatial candidate set $\mathcal{C}_{spa}$ that captures the local neighborhood structure in the vector space (line~2). To further introduce interval-relevant candidates, the algorithm initializes an empty attribute-aware candidate set for each node (lines~3--4), and then examines four interval-derived keys, including the left endpoint $l$, the right endpoint $r$, the midpoint $\mathit{mid}$, and the interval length $\mathit{len}$ (line~5). For each key, all nodes are sorted according to the corresponding interval attribute (line~7). 
The algorithm then scans the sorted order and collects nearby nodes around each node $u$ as attribute-aware candidates (lines~8--10). 
In particular, it fetches up to $\lfloor ef_{attribute}/8 \rfloor$ adjacent nodes from each side of $u$ under the current ordering, so that the overall number of interval-based candidates is controlled by $ef_{attribute}$ across the four keys and two directions (line~9). 
These candidates are useful because nodes close in such interval-derived orders are more likely to survive the same interval constraint or serve as semantic witnesses in later pruning. 
Finally, UG merges the spatial and attribute-aware candidates, removes duplicates and the node itself, and returns the initial candidate sets for all nodes (lines~11--14).

Candidate generation in UG is motivated by the observation in Section~\ref{sec:urng}
that URNG edges are determined by both spatial proximity and semantically
valid witnesses. Accordingly, UG constructs the candidate pool from two
complementary sources: spatial neighbors, which provide the basic
navigational backbone, and interval-aware neighbors, which increase the
chance of covering useful IF/IS witnesses. Since exact URNG construction
would require considering all nodes, this stage only provides a bounded
candidate space rather than enforcing URNG properties by itself. The
URNG-like structure is further approximated in the subsequent stages:
unified pruning applies the interval-aware witness conditions within the
candidate pool, while iterative repair compensates for missing continuation
paths caused by the bounded candidate set. Empirically, this design provides
sufficient candidate coverage for high-recall search, as shown in Section~\ref{sec:experiments}.

\begin{algorithm}[t]
\caption{UG Iterative Construction}
\label{alg:ug_build}
\SetKwInput{Input}{Input}
\SetKwInput{Output}{Output}
\SetKwProg{Fn}{Function}{}{end}

\Input{Dataset $V$, initial candidates $\mathcal{C}^{(0)}$, semantic degree budget $M$, iterations $T$}
\Output{Neighbor sets $\mathcal{N}^{(T)}$ with semantic bitmasks}

\Fn{\textsc{BuildUG}($V, \mathcal{C}^{(0)}, M, T$)}{
    \ForEach{$u \in V$}{
        $\mathcal{W}^{(0)}(u) \leftarrow \emptyset$\;
    }
    
    \For{$t \leftarrow 1$ \KwTo $T$}{
        \ForEach{$u \in V$}{
            $\widehat{\mathcal{C}}^{(t)}(u) \leftarrow \textsc{Unique}(\mathcal{C}^{(t-1)}(u) \cup \mathcal{W}^{(t-1)}(u))$\;
            $\mathcal{W}^{(t)}(u) \leftarrow \emptyset$\;
        }
        
        \ForEach{$u \in V$}{
            $\mathcal{N}^{(t)}(u), \Delta \mathcal{W} \leftarrow \textsc{UnifiedPrune}(u, \widehat{\mathcal{C}}^{(t)}(u), M)$\;
            $\mathcal{C}^{(t)}(u) \leftarrow \{\, v \mid (v,\cdot) \in \mathcal{N}^{(t)}(u) \,\}$\;
            
            \ForEach{$(w,v) \in \Delta \mathcal{W}$}{
                $\mathcal{W}^{(t)}(w) \leftarrow \mathcal{W}^{(t)}(w) \cup \{v\}$\;
            }
        }
    }
    
    \KwRet $\mathcal{N}^{(T)}$\;
}
\end{algorithm}

\subsection{Pruning Strategy%UG Construction via Iterative Unified Pruning
}
\label{subsec:ug_build}

After obtaining the candidate graph, UG constructs the final index by pruning redundant edges while preserving, as much as possible, the structural behavior of URNG. The goal of this stage is to approximate the unified pruning semantics of URNG within a bounded candidate pool, while keeping the graph sparse enough for efficient search. %Although the maximum degree is controlled by a parameter \(M\), this bound mainly serves as an engineering constraint; the core design lies in how pruning is performed under the two interval semantics.
To support both IFANN queries and ISANN queries within a single graph, UG associates each retained edge with a semantic bitmask indicating whether the edge is active under IF semantics, IS semantics, or both. Therefore, UG does not explicitly maintain two separate graphs. Instead, it stores a unified edge set, and later query processing dynamically selects the %appropriate logical 
subgraph according to the query type.

Algorithm~\ref{alg:ug_build} presents the iterative construction procedure of UG. The algorithm starts from the initial candidate sets $\mathcal{C}^{(0)}$ and maintains a repair candidate set $\mathcal{W}^{(t)}$ for each iteration. Initially, the repair set of every node is empty (lines~2--3). At the beginning of the $t$-th iteration, UG forms a refined candidate pool for each node by merging the candidates retained from the previous round and the repair candidates generated in the previous round (lines~5--7). This design avoids accumulating all historical candidates, while still allowing potentially useful pruned endpoints to be reconsidered. After the refined candidate pool is obtained, UG applies \textsc{UnifiedPrune} to each node, producing a semantic neighbor set $\mathcal{N}^{(t)}(u)$ and a set of repair pairs $\Delta\mathcal{W}$ (lines~8--9). The retained neighbors are extracted as the base candidate set for the next iteration (line~10). For each repair pair $(w,v)$, the pruned endpoint $v$ is inserted into the repair set of the witness node $w$, so that the possible continuation path through $w$ can be explored in the next round (lines~11--12). After $T$ iterations, the algorithm returns the final neighbor sets $\mathcal{N}^{(T)}$, where each retained edge is associated with its semantic bitmask (line~13).

\begin{algorithm}[t]
\caption{UG Unified Prune}
\label{alg:unified_prune}
\SetKwInput{Input}{Input}
\SetKwInput{Output}{Output}
\SetKwProg{Fn}{Function}{}{end}

\Input{Node $u$, candidate set $\mathcal{C}(u)$, semantic degree budget $M$}
\Output{Pruned neighbors $\mathcal{N}(u)$ and repair set $\Delta\mathcal{W}$}

\Fn{\textsc{UnifiedPrune}($u,\mathcal{C}(u),M$)}{
    $\mathcal{N}(u)\leftarrow\emptyset,\ \Delta\mathcal{W}\leftarrow\emptyset$\;
    $cnt_{IF}\leftarrow0,\ cnt_{IS}\leftarrow0$\;
    sort $\mathcal{C}(u)$ by increasing $\delta(u,\cdot)$\;
    
    \ForEach{$v\in\mathcal{C}(u)$}{
        $s_v[IF]\leftarrow1,\ s_v[IS]\leftarrow1$\;
        \If{$I_u\cap I_v=\emptyset$}{
            $s_v[IS]\leftarrow0$\;
        }
        
        \ForEach{$w\in\mathcal{N}(u)$}{
            \If{$s_v=0$}{\textbf{break}\;}
            \If{$\delta(v,w)\ge\delta(u,v)$}{\textbf{continue}\;}
            
            \ForEach{$\sigma\in\{IF,IS\}$}{
                \If{$s_v[\sigma]=1$ \textbf{and} $w.status[\sigma]=1$ \textbf{and} $\Phi_\sigma(u,v,w)$}{
                    $s_v[\sigma]\leftarrow0$\;
                    $\Delta\mathcal{W}\leftarrow\Delta\mathcal{W}\cup\{(w,v)\}$\;
                }
            }
        }
        
        \ForEach{$\sigma\in\{IF,IS\}$}{
            \If{$s_v[\sigma]=1$}{
                \lIf{$cnt_\sigma<M$}{$cnt_\sigma\leftarrow cnt_\sigma+1$}
                \lElse{$s_v[\sigma]\leftarrow0$}
            }
        }
        
        \If{$s_v\neq0$}{
            $v.status\leftarrow s_v$\;
            $\mathcal{N}(u)\leftarrow\mathcal{N}(u)\cup\{v\}$\;
        }
    }
    
    \KwRet $\mathcal{N}(u),\Delta\mathcal{W}$\;
}
\end{algorithm}

For compactness, we denote the interval of a node $x$ by
$I_x=[l_x,r_x]$. For $\sigma\in\{IF,IS\}$, let $s[\sigma]$ denote
whether semantic $\sigma$ is active in bitmask $s$. We define the semantic
witness conditions as
$\Phi_{IF}(u,v,w): I_w\subseteq I_u\cup I_v$ and
$\Phi_{IS}(u,v,w): I_u\cap I_v\subseteq I_w$, where the IS condition is
considered only when $I_u\cap I_v\neq\emptyset$.

\begin{algorithm}[t]
\caption{UG Interval-Aware Beam Search}
\label{alg:ug_search}
\SetKwInOut{Input}{Input}
\SetKwInOut{Output}{Output}
\SetKwProg{Fn}{Function}{}{end}

\newcommand{\FLAGIF}{\mathsf{FLAG}_{IF}}
\newcommand{\FLAGIS}{\mathsf{FLAG}_{IS}}

\Input{Graph $G$, query $Q=(q_v,q_I,k,type)$, beam size $ef_{search}$}
\Output{Top-$k$ nearest neighbors}

\Fn{\textsc{ContextAwareSearch}($G, Q, ef_{search}$)}{
    $start \leftarrow \textsc{GetEntryNode}(Q.q_I, Q.type)$\;
    \lIf{$start=\text{NULL}$}{\KwRet $\emptyset$}
    
    $C \leftarrow \{start\},\ R \leftarrow \{start\},\ Visited \leftarrow \{start\}$\;
    
    \While{$C \neq \emptyset$}{
        $u \leftarrow \text{ExtractMin}(C)$\;
        
        \If{$|R| \ge ef_{search}$ \textbf{and} $\delta(u,q_v) > \max_{x \in R}\delta(x,q_v)$}{
            \textbf{break}\;
        }
        
        \ForEach{$v \in N(u) \setminus Visited$}{
            $Visited \leftarrow Visited \cup \{v\}$\;
            
            \If{$Q.type=\text{IF}$}{
                \If{$(v.status\ \&\ \FLAGIF)\neq 0$ \textbf{and} $[l_v,r_v] \subseteq Q.q_I$}{
                    $C \leftarrow C \cup \{v\}$\;
                    $R \leftarrow R \cup \{v\}$\;
                    \lIf{$|R| > ef_{search}$}{$\text{RemoveMax}(R)$}
                }
            }
            \ElseIf{$Q.type=\text{IS}$}{
                \If{$(v.status\ \&\ \FLAGIS)\neq 0$ \textbf{and} $Q.q_I \subseteq [l_v,r_v]$}{
                    $C \leftarrow C \cup \{v\}$\;
                    $R \leftarrow R \cup \{v\}$\;
                    \lIf{$|R| > ef_{search}$}{$\text{RemoveMax}(R)$}
                }
            }
        }
    }
    
    \KwRet Top-$k$ nodes from $R$\;
}
\end{algorithm}

Algorithm~\ref{alg:unified_prune} presents the local unified pruning rule used in UG. The algorithm first initializes the neighbor set, the repair set, and the semantic degree counters, and then sorts all candidates by their distance to $u$ (lines~1--4). This ordering allows previously retained neighbors to serve as possible witnesses for later candidates. For each candidate $v$, UG initializes both semantic bits as active, and immediately removes the IS bit if $I_u$ and $I_v$ do not overlap, since no ISANN query can make both endpoints valid in this case (lines~5--8). The pruning step then scans the already retained neighbors of $u$ as candidate witnesses (lines~9--17). If both semantic bits of $v$ have been removed, the scan stops early (lines~10--11). Otherwise, a retained neighbor $w$ is considered only when it satisfies the geometric witness condition $\delta(v,w)<\delta(u,v)$ (lines~12--13); the other RNG-style condition $\delta(u,w)<\delta(u,v)$ is guaranteed by the sorted processing order. For each semantic $\sigma\in\{IF,IS\}$, the bit $s_v[\sigma]$ is cleared when $w$ is also active under $\sigma$ and satisfies the corresponding semantic condition $\Phi_\sigma(u,v,w)$ (lines~14--16). Whenever a semantic bit is removed, the pair $(w,v)$ is recorded in $\Delta\mathcal{W}$ as a repair candidate for the next iteration (line~17). After witness pruning, UG enforces the semantic degree budget separately for IF and IS, so that each semantic projection remains sparse (lines~18--21). Finally, if at least one semantic bit of $v$ remains active, $v$ is retained as a neighbor of $u$ together with its remaining bitmask (lines~22--24). The algorithm returns the pruned neighbor set and the repair set (line~25).

The iterative repair mechanism is introduced to compensate for the bounded candidate pool. In the exact URNG, an edge can be safely pruned because there exists a witness edge together with a monotone continuation path. In practice, however, such a continuation path may not yet be present in the current local candidate pool. UG therefore records the pruned endpoint as a repair candidate of the witness node and reintroduces it in the next round. This makes the iterative construction process a heuristic approximation of monotone-path repair under bounded candidate pools.

\begin{algorithm}[t]
\caption{Entry Node Acquisition}
\label{alg:get_entry}
\SetKwInOut{Input}{Input}
\SetKwInOut{Output}{Output}
\SetKwProg{Fn}{Function}{}{end}

\Input{Query interval $q_I=[q_l,q_r]$, query type $type$, sorted arrays $Arr$}
\Output{Entry node ID or \text{NULL}}

\Fn{\textsc{GetEntryNode}($q_I, type$)}{
    \If{$type=\text{IF}$}{
        $i \leftarrow \text{BinarySearchFirst}(Arr.L,\ge,q_l)$\;
        \If{$i$ is valid \textbf{and} $Arr.SuffMinR[i].val \le q_r$}{
            \KwRet $Arr.SuffMinR[i].node\_id$\;
        }
    }
    \ElseIf{$type=\text{IS}$}{
        $i \leftarrow \text{BinarySearchLast}(Arr.L,\le,q_l)$\;
        \If{$i$ is valid \textbf{and} $Arr.PrefMaxR[i].val \ge q_r$}{
            \KwRet $Arr.PrefMaxR[i].node\_id$\;
        }
    }
    \KwRet \text{NULL}\;
}
\end{algorithm}

\begin{theorem}
\label{thm:asymptotic_heredity}
When $M=\infty$, pruning on the whole candidate graph is equivalent to pruning directly on the candidate graph induced by the query-valid nodes. In other words, the candidate-based pruning procedure preserves heredity asymptotically.
\end{theorem}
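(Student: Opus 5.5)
We mirror the induction used for Theorem~\ref{thm:heredity}, now applied to the local procedure \textsc{UnifiedPrune} (Algorithm~\ref{alg:unified_prune}) instead of the global URNG definition. Fix a semantics $\sigma\in\{\mathrm{IF},\mathrm{IS}\}$ and a query interval $I$, and let $V_I$ be the query-valid nodes. For a node $u\in V_I$ with candidate set $\mathcal{C}(u)$, we compare two runs of \textsc{UnifiedPrune}. The first run uses $\mathcal{C}(u)$. The second uses $\mathcal{C}(u)\cap V_I$, which is the candidate set $u$ has in the candidate graph induced by $V_I$. The claim to establish is that, for every $v\in\mathcal{C}(u)\cap V_I$, the bit $s_v[\sigma]$ produced by both runs is identical. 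Setting $M=\infty$ makes lines~18--21 vacuous, so a retained bit is determined solely by the witness scan. This is exactly why the degree budget must be removed: with finite $M$, invalid neighbors would consume slots of $cnt_\sigma$ and the counts would differ between the two runs.

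The key lemma is the same closure property used in Theorem~\ref{thm:heredity}: if $u,v\in V_I$ and $w$ satisfies $\Phi_\sigma(u,v,w)$, then $w\in V_I$. For IF, $I_u,I_v\subseteq I$ gives $I_u\cup I_v\subseteq I$ (the hull of two subintervals of an interval stays inside it), so $I_w\subseteq I$. For IS, $I\subseteq I_u\cap I_v\subseteq I_w$. The initial IS-bit clearing when $I_u\cap I_v=\emptyset$ depends only on $u,v$, so it agrees in both runs. In fact, for $u,v\in V_I$ under IS the intersection is automatically nonempty.

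We then induct over candidates in increasing order of $\delta(u,\cdot)$, which is the processing order in both runs since the restricted list is a subsequence of the full one. The hypothesis is that every valid candidate processed before $v$ received the same $\sigma$-bit in both runs. Suppose $v$'s $\sigma$-bit is cleared in the full run by a retained witness $w$ with $w.status[\sigma]=1$, $\delta(v,w)<\delta(u,v)$, and $\Phi_\sigma(u,v,w)$. By the lemma $w\in V_I$, $w$ precedes $v$, and by the hypothesis $w$ has $\sigma$-status $1$ in the restricted run too, so $v$ is cleared there. Conversely, any witness available in the restricted run is a valid candidate with the same status in the full run, so it clears $v$ there as well.

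The early \texttt{break} when $s_v=0$ and the interleaving of the other semantic bit affect only which witness is recorded, not the final value of $s_v[\sigma]$, since the bits are cleared independently per $\sigma$. I expect this to be the main point needing care. In particular, a $w$ retained only for the other semantics is still in $\mathcal{N}(u)$, but it cannot act as a $\sigma$-witness because its $\sigma$-status is $0$. The same argument therefore covers it.

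For the iterative version, the repair pairs $(w,v)$ added to $\Delta\mathcal{W}$ from valid $u,v$ have $w\in V_I$, so repair sets restricted to $V_I$ also coincide. Induction on the iteration $t$ then extends the equality to $\mathcal{N}^{(T)}$. This gives equality of the $\sigma$-projections of the pruned graph restricted to $V_I$ and the pruned induced candidate graph, which is the asymptotic heredity claim.
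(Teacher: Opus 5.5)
Your single-pass argument is correct and follows the paper's route: the closure property (a $\sigma$-witness of two valid endpoints is itself valid) combined with induction in increasing distance. You run the induction per node over the sorted candidate list, and you check the \texttt{break}, the neighbors retained only for $\bar\sigma$, and why $M=\infty$ is needed (with a finite budget, invalid neighbors consume $cnt_\sigma$). That makes your version more faithful to Algorithm~\ref{alg:unified_prune} than the paper's proof, which argues at the level of the URNG definition and justifies $M=\infty$ only by saying the candidate set is not truncated.

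The gap is your final paragraph. The paper proves the statement only for one pruning pass over a fixed candidate graph. The extension to $\mathcal{N}^{(T)}$ in Algorithm~\ref{alg:ug_build} is false in general, because the pools $\widehat{\mathcal{C}}^{(t+1)}(x)\cap V_I$ of valid nodes need not agree between the two runs. There are two reasons.

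First, in the full run a valid node's repair set also receives pairs emitted while pruning an invalid node; you only checked the converse direction. Take an IF query with $I=[0,10]$, $I_u=[0,20]$ (so $u\notin V_I$), $I_v=I_w=[1,2]$, $\delta(u,w)=1$, $\delta(u,v)=1.5$, $\delta(v,w)=0.8$, $\mathcal{C}^{(0)}(u)=\{v,w\}$ and $\mathcal{C}^{(0)}(v)=\mathcal{C}^{(0)}(w)=\{u\}$. In iteration~1 of the full run, $w$ clears $v$ at $u$, so $v\in\mathcal{W}^{(1)}(w)$. In iteration~2, $v$ is the closest candidate of $w$ and is kept, giving an IF-active edge $(w,v)$. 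The candidate graph induced by $V_I=\{v,w\}$ has no edges, so this edge never appears there.

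Second, the bit of the other semantics $\bar\sigma$ is not hereditary for $V_I$: for an IF query, an IS-witness of two valid nodes can be IF-invalid. Yet this bit decides membership in the carried-over pool $\mathcal{C}^{(t)}(x)=\{y : s_y\neq 0\}$. It also decides which pairs enter $\Delta\mathcal{W}$, through the first $\bar\sigma$-witness and the early \texttt{break}. So even repairs generated at valid nodes can differ.

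Either way, the inputs to the next pass differ, and your single-pass lemma cannot be reapplied. State and prove the result for a single \textsc{UnifiedPrune} pass with $M=\infty$, as the paper does, and drop the claim about $\mathcal{N}^{(T)}$.
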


\begin{proof}
\comment{Let the candidate graph be $G=(V,E)$. For an interval query $I$, let $G[I]=(V_I,E_I)$ denote the subgraph induced by valid nodes, and let $G'=(V',E')$ denote the graph obtained by directly constructing the candidate graph on the same valid node set. Since $V_I = V'$, the candidate edge sets before pruning are identical.

We then consider all candidate edges in ascending order of distance. The shortest edge cannot be pruned, because there exists no shorter edge that can witness it as the longest edge of a triangle. Now assume inductively that all shorter edges have identical semantic states in both $E_I$ and $E'$. Consider an edge $(u,v)$.

If the IF bit or IS bit of $(u,v)$ remains active in $E_I$, then there does not exist any valid witness edge with the same semantic state that satisfies the corresponding pruning condition. Since the candidate set in $E'$ is a subset of that in $E_I$, such a witness cannot appear in $E'$ either. Hence, the same semantic bit remains active in $E'$.

Conversely, suppose that one semantic bit of $(u,v)$ is inactive in $E_I$. Then there exists a witness edge $(u,w)$ with the same semantic status that satisfies the corresponding pruning condition. For an IFANN query, from the interval containment condition we have
\[
I.l \leq \min(u.l,v.l) \leq w.l \leq w.r \leq \max(u.r,v.r) \leq I.r,
\]
which implies $[w.l,w.r] \subseteq I$, and hence $w \in V_I = V'$. For an ISANN query, from the interval stabbing condition we have
\[
w.l \leq \max(u.l,v.l) \leq I.l \leq I.r \leq \min(u.r,v.r) \leq w.r,
\]
which implies $I \subseteq [w.l,w.r]$,and again $w \in V_I = V'$. Since $\mathrm{dis}(u,w) < \mathrm{dis}(u,v)$, the induction hypothesis ensures that $(u,w)$ has the same semantic state in $E'$. Therefore, $(u,v)$ must also be pruned in $E'$. Thus, the two graphs have identical semantic states for every edge, and hence $E_I = E'$.}

When $M=\infty$, the candidate set is not truncated, so all possible witness nodes are available. Let $G[I]$ be the graph obtained by pruning the whole candidate graph and then restricting it to the query-valid nodes, and let $G'$ be the graph obtained by directly pruning the candidate graph induced by the same valid node set. Since both procedures operate on the same valid node set, they have the same candidate edges before pruning.

We compare edges in ascending order of distance. The shortest edge cannot be pruned. For any edge $(u,v)$, a semantic bit is pruned iff there exists a shorter valid witness node $w$ satisfying the corresponding pruning condition. Such a witness is preserved after restricting to valid nodes: under IF semantics, the pruning condition implies
\[
[w.l,w.r]\subseteq [\min(u.l,v.l),\max(u.r,v.r)]\subseteq I,
\]
so $w$ is query-valid; under IS semantics, it implies
\[
I\subseteq [\max(u.l,v.l),\min(u.r,v.r)]\subseteq [w.l,w.r],
\]
so $w$ is also query-valid. Therefore, every witness used in the whole graph also appears in the induced graph, and vice versa. Hence each edge has the same IF/IS semantic state in $G[I]$ and $G'$, implying $G[I]=G'$.

Therefore, when $M=\infty$, candidate-based pruning preserves structural heredity.
\end{proof}

\begin{theorem}
\label{thm:ug_complexity}
The construction complexity of UG is \(O(TCM_{ug})\), where \(T\) is the number of pruning iterations, \(C=ef_{spatial}+ef_{attribute}\) is the candidate set size, and \(M_{ug}\) denotes the number of effective edge examinations in UG.
\end{theorem}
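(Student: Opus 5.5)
The plan is to bound the cost of Algorithm~\ref{alg:ug_build} one iteration at a time and then multiply by $T$. Candidate generation (Algorithm~\ref{alg:ug_cand}) is handled separately and shown to be dominated. Sorting $V$ under the four keys costs $O(n\log n)$, and collecting $ef_{attribute}$ neighbors per node costs $O(n\,ef_{attribute})$. \textsc{NNDescent} with budget $ef_{spatial}$ is treated as a preprocessing step whose cost is $O(n\,ef_{spatial}\cdot\text{poly}(ef_{spatial}))$ under its usual empirical model. Both are absorbed into one iteration of the main loop, because $M_{ug}\ge n$ whenever every node has at least one edge. I would state this absorption as an explicit assumption rather than derive it.

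\textbf{Size of the refined candidate pool.} Fix an iteration $t$ and a node $u$. The pool is $\widehat{\mathcal{C}}^{(t)}(u)=\mathcal{C}^{(t-1)}(u)\cup\mathcal{W}^{(t-1)}(u)$. For $t=1$ its size is at most $C=ef_{spatial}+ef_{attribute}$. For $t\ge2$, $\mathcal{C}^{(t-1)}(u)$ consists of retained neighbors, so its size is at most $\deg(u)$. The set $\mathcal{W}^{(t-1)}(u)$ receives one entry for each repair pair $(u,v)$ produced when $u$ acts as a witness in some other node's pruning. Each such pair is charged to an effective edge examination $(x,u)$ performed while pruning $x$. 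I would then argue that the total $\sum_u|\widehat{\mathcal{C}}^{(t)}(u)|$ is $O(n\,C)$. The step I expect to be the main obstacle is showing that the repair sets do not blow up, i.e.\ that $|\mathcal{W}^{(t-1)}(u)|=O(C)$ on average. I would handle it by noting that each call of \textsc{UnifiedPrune} on $x$ emits at most two repair pairs per candidate of $x$ (one per semantic bit). The number of emitted pairs is therefore at most $2\sum_x|\widehat{\mathcal{C}}(x)|$, which gives a self-consistent bound of $O(nC)$ across iterations.

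\textbf{Cost of one call of \textsc{UnifiedPrune}.} Sorting the pool costs $O(|\widehat{\mathcal{C}}(u)|\log|\widehat{\mathcal{C}}(u)|)$. The scan compares each candidate $v$ against the current retained list $\mathcal{N}(u)$, whose size is at most $\deg_{ug}(u)\le 2M$, and does $O(1)$ work per $\sigma\in\{IF,IS\}$ for each comparison. The cost of the call is therefore $O(|\widehat{\mathcal{C}}(u)|\cdot\deg_{ug}(u))$ plus the sorting term. Summing over $u$ and using $|\widehat{\mathcal{C}}(u)|=O(C)$ from the previous paragraph gives $O\bigl(C\sum_u\deg_{ug}(u)\bigr)=O(C\,M_{ug})$. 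Here $M_{ug}$ is interpreted, as in the statement, as the total number of effective edge examinations, i.e.\ witness checks against retained edges. The sorting term $O(nC\log C)$ is dominated because $C$ is treated as a constant. Propagating the repair pairs (lines~11--12) costs $O(|\Delta\mathcal{W}|)=O(nC)$, which is also dominated.

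Multiplying the per-iteration bound by the $T$ iterations yields $O(T\,C\,M_{ug})$ and completes the proof. The only delicate point is the charging argument for the repair sets described above; the rest is summing local costs.
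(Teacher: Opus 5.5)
You are right to single out the candidate-pool size as the crux. Your overall accounting, $\sum_u|\widehat{\mathcal{C}}^{(t)}(u)|\cdot|\mathcal{N}^{(t)}(u)|$ witness checks per round plus sorting, multiplied by $T$, is the same linear scaling the paper relies on. But your derivation of that step does not go through. It is also the step the paper does not prove at all: it simply \emph{assumes} that ``under a small number of iterations, the candidate pool remains stable up to a constant factor after deduplication.'' Your charging argument fails for two reasons.

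First, it does not give a uniform $O(nC)$ bound. Write $S_t=\sum_u|\widehat{\mathcal{C}}^{(t)}(u)|$. Each candidate processed in round $t$ feeds at most two entries into the next round's pools: either a retained neighbor plus at most one repair pair, or at most two repair pairs. So all you get is $S_{t+1}\le 2S_t$, i.e.\ $S_t\le 2^{t-1}nC$. Calling this a self-consistent $O(nC)$ bound is the big-$O$-inside-an-induction slip. It produces a factor $2^T$, not the linear dependence on $T$ in the statement.

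Second, an aggregate bound is not what your next paragraph actually uses. Passing from $\sum_u|\widehat{\mathcal{C}}(u)|\deg(u)$ to $C\sum_u\deg(u)$ needs $|\widehat{\mathcal{C}}(u)|=O(C)$ \emph{per node}, or at least degree-weighted. Repair pairs $(w,v)$ are charged to the witness $w$. A node $w$ can serve as a witness in the pruning of every $x$ with $w\in\mathcal{N}(x)$, so as often as its in-degree. Algorithm~\ref{alg:unified_prune} never caps in-degree; only out-degree is bounded by $2M$. Hub nodes can therefore accumulate $|\mathcal{W}(w)|\gg C$, up to $n-1$. Nothing prevents such nodes from also having out-degree near $2M$, and then they dominate the weighted sum. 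Falling back on $\deg(u)\le 2M$ gives only $O(nCM)$, which matches $O(CM_{ug})$ only when almost every node saturates its degree budget.

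There are two smaller issues. Multiplying a per-round bound by $T$ using the \emph{final} $M_{ug}$ assumes every intermediate $\mathcal{N}^{(t)}$ has $O(M_{ug})$ edges. That is not automatic, because reintroduced repair candidates can become new witnesses and prune earlier neighbors. You also switch between two readings of $M_{ug}$. Your derivation needs $M_{ug}=\sum_u\deg(u)$. Under the other reading, the number of witness checks, one round of pruning already costs $O(M_{ug})$ and the factor $C$ is never derived.

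The fix is to do what the paper does. State explicitly, as an assumption about the data and the small number of rounds, that $|\widehat{\mathcal{C}}^{(t)}(u)|=O(C)$ for every $u$ and every $t\le T$, and that each intermediate graph has $O(M_{ug})$ edges. Under that assumption your per-call bound $O(|\widehat{\mathcal{C}}(u)|\cdot|\mathcal{N}(u)|)$, summed over $u$ and $t$, gives $O(TCM_{ug})$ directly. Drop the charging paragraph rather than offering it as a proof of pool stability. Without the assumption, your ingredients only support a bound of order $2^T nCM$.
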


\begin{proof}
In each iteration, UG examines candidate edges and may generate additional repair candidates from pruned edges. Under a small number of iterations, the candidate pool remains stable up to a constant factor after deduplication. Therefore, the total cost scales linearly with the number of iterations, the candidate size, and the number of effective edge examinations, yielding \(O(TCM_{ug})\).
\end{proof}

\subsection{Query Processing}
\label{subsec:ug_search}

\begin{figure}[t]\vspace*{-1.5em}
  \centering
  \includegraphics[width=\linewidth]{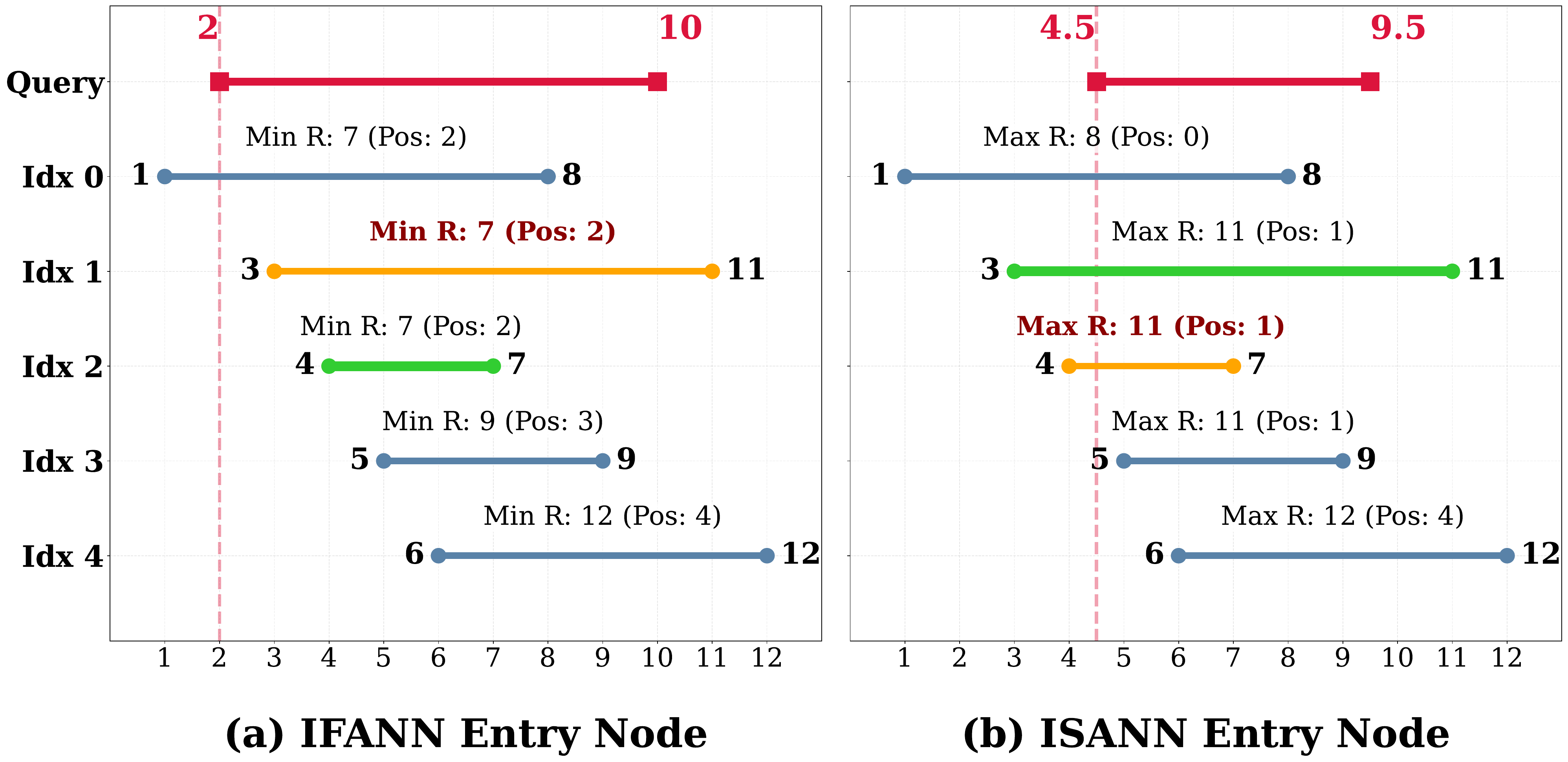}
  \vspace{-2.5em}
  \caption{{Entry Node Acquisition}}
  \vspace{-1.5em}
  \label{fig:ug_entry_node}
\end{figure}

We propose an efficient search algorithm that supports ANN search over the UG. 
% Once UG has been constructed, top-$k$ search is performed by augmenting standard beam search with interval-aware filtering. 
Based on Theorem~\ref{thm:asymptotic_heredity}, UG can be searched directly under the query predicate without explicitly materializing the query-induced subgraph, which would otherwise incur substantial per-query overhead.

To initialize search within the valid subgraph, UG first locates an entry node that already satisfies the query predicate. This is achieved by sorting all nodes by their left endpoints and maintaining two auxiliary arrays: the suffix minimum of right endpoints and the prefix maximum of right endpoints. As shown in Algorithm~\ref{alg:get_entry}, this allows a valid entry node, if one exists, to be found in logarithmic time.

Given the acquired entry node, Algorithm~\ref{alg:ug_search} performs an interval-aware beam search on the unified physical graph. Instead of constructing different graph structures for IFANN and ISANN queries, UG dynamically filters neighbors using the semantic bitmask stored on each edge together with the corresponding interval predicate. In this way, the same physical graph is searched as the logical IFANN-induced or ISANN-induced subgraph, depending on the query type.

\begin{lemma}[Entry-node acquisition]
\label{thm:entry_correctness}
Algorithm~\ref{alg:get_entry} is correct for both IFANN and ISANN queries. In particular:

(1) if the algorithm returns a node \(u \neq \textsc{NULL}\), then \(u\) satisfies the corresponding query predicate;

(2) if the algorithm returns \(\textsc{NULL}\), then no valid node exists in the dataset for that query;

(3) the time complexity of a single invocation is \(O(\log n)\).
\end{lemma}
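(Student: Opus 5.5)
We first fix the data structures that Algorithm~\ref{alg:get_entry} relies on. All nodes are sorted by left endpoint to give the array $Arr.L$, with $Arr.L[1]\le\cdots\le Arr.L[n]$. For each index $i$, $Arr.SuffMinR[i]$ stores the minimum right endpoint over positions $i,\dots,n$, together with the node attaining it. Likewise, $Arr.PrefMaxR[i]$ stores the maximum right endpoint over positions $1,\dots,i$, together with its node. The proof rests on one observation per query type: the set of positions whose left endpoint passes the left-side test is a contiguous block. Once the search restricts attention to that block, the right-side test reduces to a single extremum over it.

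For IFANN with $q_I=[q_l,q_r]$, a node $x$ is valid iff $l_x\ge q_l$ and $r_x\le q_r$. Because $Arr.L$ is sorted, the nodes with $l_x\ge q_l$ are exactly those at positions $i,\dots,n$, where $i$ is returned by \textsc{BinarySearchFirst}. If no such $i$ exists, no node satisfies the left condition, so returning \textsc{NULL} is correct.

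\begin{itemize}[leftmargin=*]
\item \textbf{Soundness, part (1).} Suppose the test $Arr.SuffMinR[i].val\le q_r$ succeeds. The returned node lies at some position $j\ge i$, so $l\ge q_l$. Its right endpoint is the stored minimum, which is at most $q_r$. Hence it is valid.
\item \textbf{Completeness, part (2).} Suppose some valid $x$ exists. Then $x$ lies at a position $j\ge i$, so $\min_{j'\ge i} r_{j'}\le r_x\le q_r$. The test therefore succeeds, and the algorithm never returns \textsc{NULL} in this case.
\end{itemize}

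For ISANN, a node $x$ is valid iff $l_x\le q_l$ and $r_x\ge q_r$. The nodes with $l_x\le q_l$ form the prefix $1,\dots,i$, where $i$ is returned by \textsc{BinarySearchLast}. The same two directions go through with the prefix maximum replacing the suffix minimum. Whenever a query is valid, $l_x\le r_x$ holds for the relevant nodes, so no degenerate case arises.

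The step that needs the most care is the equivalence between ``some valid node exists'' and ``the extremum over the block passes the test.'' We will state it once, as the identity
\[
\exists x:\ l_x\ge q_l \wedge r_x\le q_r \iff \min\{r_j : j\ge i\}\le q_r,
\]
together with its dual for ISANN. This identity gives parts (1) and (2) at once. Ties among equal left endpoints also need attention: we will note that ``first index with $L\ge q_l$'' and ``last index with $L\le q_l$'' are exactly the boundaries of the qualifying blocks, so no equal-key node falls outside them.

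For part (3), each invocation performs one binary search, costing $O(\log n)$, followed by one array lookup and one comparison, each costing $O(1)$. Sorting and computing the prefix and suffix arrays take $O(n\log n)$ and $O(n)$ respectively, but this preprocessing is done once and is not charged to the query.
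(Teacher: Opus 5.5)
Your proof is correct and follows essentially the same route as the paper. In both, the binary search isolates the contiguous block of nodes that pass the left-endpoint test (a suffix for IFANN, a prefix for ISANN); the stored suffix-minimum or prefix-maximum of right endpoints then both decides whether a valid node exists and supplies one, and one binary search plus $O(1)$ lookups gives the $O(\log n)$ bound. Your explicit handling of ties and of the case where the binary search finds no index is a small refinement that the paper leaves implicit.
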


\begin{proof}
For an IFANN query, the algorithm first finds the smallest index \(i\) such that \(l_u \ge q_l\). Any valid node must therefore lie in the suffix starting from \(i\). If the minimum right endpoint in this suffix is at most \(q_r\), then the returned node satisfies \(l_u \ge q_l\) and \(r_u \le q_r\), and is thus valid. Otherwise, no node in the suffix can satisfy the IFANN predicate, which means no valid IFANN node exists.

For an ISANN query, the algorithm finds the largest index \(i\) such that \(l_u \le q_l\). Any valid node must then lie in the prefix ending at \(i\). If the maximum right endpoint in this prefix is at least \(q_r\), then the returned node satisfies \(l_u \le q_l\) and \(r_u \ge q_r\), and is therefore valid. Otherwise, no node in the prefix can satisfy the ISANN predicate, so no valid ISANN node exists.

In both cases, the algorithm consists of one binary search followed by a constant number of array accesses and comparisons, yielding a total complexity of \(O(\log n)\).
\end{proof}

\paragraph{Example.}
Figure~\ref{fig:ug_entry_node} illustrates the entry-node acquisition procedure for both query types. In Figure~(a), for the IFANN query interval \([2,10]\), the binary search first identifies the suffix of nodes with left endpoints no smaller than \(2\), and the entry node is then selected by the suffix-minimum right endpoint, yielding the valid interval \([4,7]\). In Figure~(b), for the ISANN query interval \([4.5,9.5]\), the binary search identifies the prefix of nodes with left endpoints no greater than \(4.5\), and the entry node is obtained by the prefix-maximum right endpoint, yielding the valid interval \([3,11]\). This example shows that the entry-node acquisition procedure is not only efficient, but also directly aligned with the two interval predicates.

% The example for entry node acquisition will be added later.

\begin{table}[t]
\small
    \centering
    \vspace*{-2em}    
    \caption{{Dataset Statistics}}
    \label{tab:dataset}
    \vspace*{-1.5em}
    \renewcommand{\arraystretch}{1.2}
    \setlength{\arrayrulewidth}{0.8pt}
    \begin{tabularx}{0.45\textwidth}{
        |>{\centering\arraybackslash}p{1.5cm}
        |>{\centering\arraybackslash}X
        |>{\centering\arraybackslash}X   
        |>{\centering\arraybackslash}X        
        |>{\centering\arraybackslash}p{1.5cm}|
        }
        % \toprule
        \hline
        \textbf{Datasets} & $|D|$ & $|Q|$ & $d$ & Vector Type \\
        % \midrule
        \hline %\hline
        % UCF-Crime & 100,000 & 10,000 & 4,096 & video \\
        DB-OpenAI & 990,000 & 10,000 & 1,536 & text \\      
        GIST1M & 1,000,000 & 1,000 & 960 & image \\        
        % MNIST & 60,000 & 10,000 & 784 & image \\        
        % \hline
        S\&P 500 & 1,445,794 & 14,603 & 384 & financial \\
        % GloVe & 1,183,514 & 10,000 & 200 & text \\        
        SIFT1M & 1,000,000 & 10,000 & 128 & image \\  
        % \hline
        DEEP1M & 990,000 & 10,000 & 96 & image \\        
        
        \hline        
        % \bottomrule
    \end{tabularx}
    \vspace*{-2.5em}
\end{table}

% ====================================
% section 5
% ====================================

\section{\yin{Experiments}}
\label{sec:experiments}
% this part is fixed by ziqi yin

\subsection{{Experimental Setup}}
\label{sec:exp-setup}

\noindent\textbf{Datasets.} {We evaluate our methods on five public datasets, including one real-world dataset (S\&P 500) and four widely used benchmark datasets (DB-OpenAI, GIST1M, SIFT1M, and DEEP1M)~\cite{yang2025hi,aumuller2020ann}. The dataset statistics are summarized in Table~\ref{tab:dataset}. Following prior work~\cite{yang2025hi,wang2025timestamp,li2025attribute}, the interval attributes and query ranges are synthetically generated for all datasets except S\&P 500, since such attributes are rarely available due to privacy concerns. For S\&P 500, the attributes are derived from real financial data.
%according to the standard  workload construction protocol when such attributes are not natively available.
}

\begin{figure*}[!t]
\centering
\vspace*{-2em}
\includegraphics[width=\textwidth]{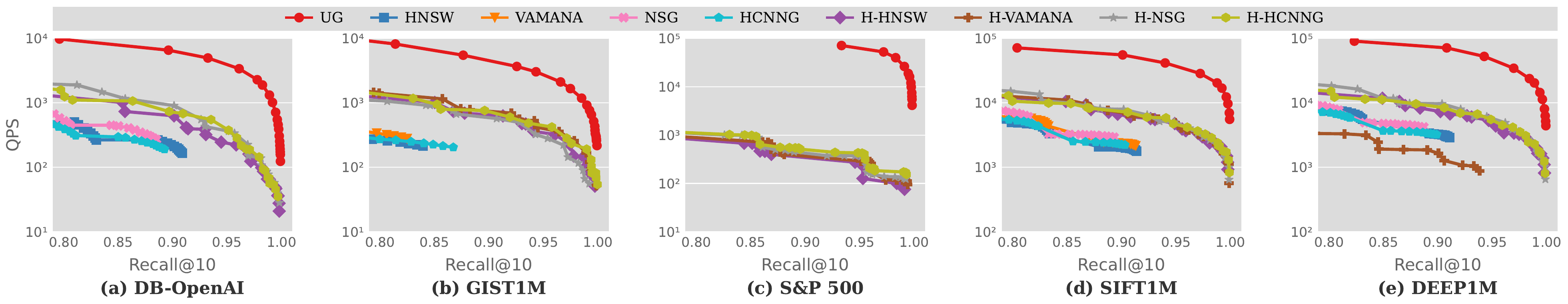}
\vspace*{-2.5em}
\caption{The accuracy–efficiency trade-off for IFANN queries under the uniform workload.}
\vspace*{-1.5em}
\label{fig:exp-1}
\end{figure*}

\begin{figure*}[!t]
  \centering
  \includegraphics[width=\linewidth]{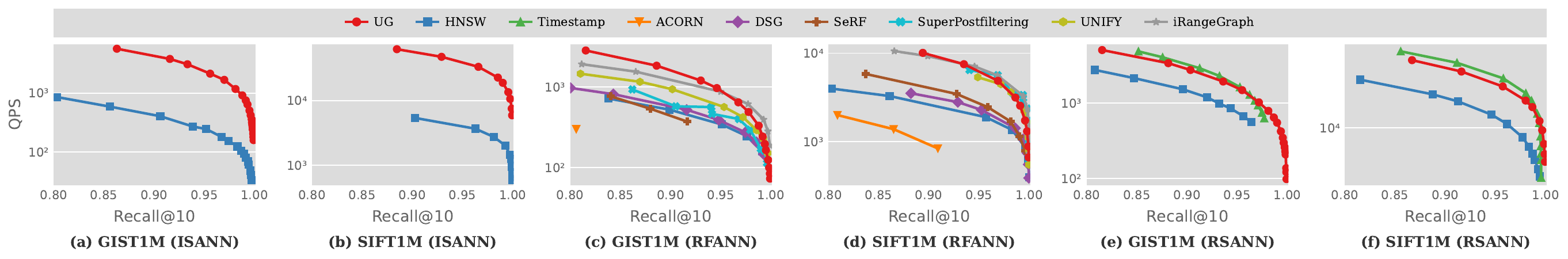}
  \vspace*{-2.5em}
  \caption{{ Query performance under diverse query types}}
  \vspace*{-2em}
  \label{fig:exp-4}
\end{figure*}

%yin{We use six real-world public datasets and their corresponding query sets, which have been widely adopted in prior studies on Interval-Filtering ANN queries~[xx] and Timestamp ANN queries~[xx]. The dataset statistics are summarized in Table~\ref{tab:dataset}. The interval attributes within these datasets are generated using random distributions following the methodology used in~[xx].}

\noindent\textbf{Baselines.} {\underline{First,} for Interval-Filtering ANN (IFANN) queries, we compare against representative methods proposed in~\cite{yang2025hi}, 
%graph-based ANN methods and their hierarchical interval-aware variants, 
including HNSW~\cite{malkov2018efficient}, Vamana~\cite{NEURIPS2019_09853c7f}, NSG~\cite{fu2017fast}, HCNNG~\cite{VARGASMUNOZ2019106970}, Hi-PNG-HNSW (denoted as H-HNSW), Hi-PNG-Vamana (denoted as H-Vamana), Hi-PNG-NSG (denoted as H-NSG), and Hi-PNG-HCNNG (denoted as H-HCNNG). %The plain graph-based baselines 
HNSW, Vamana, NSG, and HCNNG 
adopt a post-filtering strategy, i.e., retrieving top-$k'$ candidates first and then applying interval filtering, while the Hi-PNG variants partition the interval space hierarchically and build graph indexes on the corresponding partitions. \underline{Second,} for Range-Stabbing ANN (RSANN) queries, we compare against Timestamp ~\cite{wang2025timestamp} and HNSW. In particular, HNSW also adopts a post-filtering strategy, while TG constructs timestamp-aware graph indexes and compresses them into a single structure for efficient querying. \underline{Third,} for Range-Filtered ANN (RFANN) queries, we compare against RFANN-specific methods including SeRF~\cite{zuo2024serf}, DSG~\cite{peng2025dynamic}, ACORN~\cite{patel2024acorn}, iRangeGraph~\cite{xu2024irangegraph}, UNIFY~\cite{liang2024unify}, Faiss-HNSW~\cite{douze2025faiss}, SuperPostfiltering~\cite{engels2024approximate}. \underline{Finally,} for Interval-Stabbing ANN (ISANN) queries, we compare against HNSW-hnswlib~\cite{malkov2018efficient} as the only available baseline. Unless otherwise specified, we follow the benchmark implementation and parameter settings in the existing experimental evaluation~\cite{li2025attribute,wang2025timestamp,yang2025hi}.
%published experimental study on attribute-filtering ANN search~\cite{li2025attribute}. 
}

\noindent\textbf{Query Workloads.} {For the IFANN queries, we adopt the real-world interval query data for the S\&P 500 dataset and the uniform workload for the other four datasets. To further analyze robustness under different filtering conditions, we additionally construct three non-uniform workloads on GIST1M, namely short, long, and mixed. Specifically, short queries have selectivity below 5\%, long queries have selectivity above 20\%, and mixed queries contain an equal proportion of short and long intervals. For the experiments on other query types (ISANN, RFANN, and RSANN), we report results on GIST1M and SIFT1M under the uniform workload by default.}

\noindent\textbf{Parameter Settings.}
{\underline{First,} for IFANN queries, the H-PNG builds upon four representative graph-based ANN backbones, including HNSW, Vamana, NSG, and HCNNG. Following previous studies~\cite{yang2025hi}, during the indexing phase, for HNSW, Vamana, and NSG, we set the maximum degree $M=32$ and the construction candidate size $ef_{\text{construction}}=128$. For Vamana, we set $\alpha=1.2$. For HCNNG, we set the number of randomized clustering trees $T=10$ to ensure global connectivity, the leaf size threshold $L_s=1000$ to terminate recursive partitioning, and the local MST degree upper bound $s=5$ to maintain graph sparsity. For the Hi-PNG-specific construction, the leaf size threshold is set to $l_s=10000$, which determines when recursive interval partitioning stops. \underline{Second,} for RSANN queries, we follow the settings used in the original papers for Timestamp and HNSW~\cite{wang2025timestamp}. %, where HNSW adopts the post-filtering strategy.
\underline{Third,} for RFANN queries, we use the settings reported in existing experimental evaluation~\cite{li2025attribute} % on attribute filtering ANN search, 
and follow its configurations for SeRF, DSG, ACORN, iRangeGraph, UNIFY, Faiss-HNSW, and SuperPostfiltering. In particular, we use the same global graph configuration $M=40$ and $ef_{\text{construction}}=1000$ unless otherwise specified. For ACORN, we set $\gamma=25$. For SeRF, we set $M=8$ and $ef_{\max}=ef_{\text{construction}}=1000$. DSG follows the same settings as SeRF. For $\beta$-WST, we set $\textit{split\_factor}=2$ and $\textit{shift\_factor}=0.5$. For iRangeGraph, we use the same global graph configuration. For UNIFY, we set $B=8$, with $\textit{low\_threshold}=5\%$ and $\textit{high\_threshold}=50\%$, and enable the combined filtering strategy. %We cite both this benchmark paper and the corresponding original papers of these baselines.
\underline{Fourth,} for ISANN queries, we use the HNSW-hnswlib %implementation provided by hnswlib 
as the baseline
, with $M=16$ and $ef_{\text{construction}}=200$. For our proposed UG, unless otherwise specified, we use the default setting $ef_{\text{spatial}}=128$, $ef_{\text{attribute}}=300$, $\textit{max\_edges}_{IF}=256$, $\textit{max\_edges}_{IS}=256$, and
5 refinement iterations. %During query processing, we vary the beam size for evaluation and set the number of nearest neighbors to $k=10$. %There are only one exceptions, 
In the scalability experiment, we set $ef_{\text{spatial}}=32$ for nndescent.}

\noindent\textbf{Evaluation Metrics.} {Following previous studies~\cite{zou2026rnsgrangeawaregraphindex,yang2025hi}, we adopt two standard metrics to evaluate querying performance. Accuracy is measured by $\text{recall@}k = \frac{|\mathcal{R}\cap \tilde{\mathcal{R}}|}{k}$, where $\mathcal{R}$ denotes the result set returned by the evaluated method and $\tilde{\mathcal{R}}$ represents the ground-truth set obtained via brute-force search. Consistent with recent studies~\cite{zou2026rnsgrangeawaregraphindex,yang2025hi}, we report recall@10, while also presenting results for varying $k$ values (Exp-6). Efficiency is measured using queries per second (QPS), computed as $\text{QPS} = \frac{|Q|}{t}$, where $|Q|$ queries are processed within time $t$.}

\noindent\textbf{Implementation.} All experiments are conducted on a Linux server equipped with a single AMD Ryzen Threadripper 3990X 64-Core Processor (2.2GHz) and 224 GB of memory. %Except for the RFANN query experiments which use a single thread — because the benchmark code does not support multi-threaded queries and we adopt single-thread evaluation to ensure fairness~\cite{li2025attribute} — all other experiments are conducted using 48 threads. 
All algorithms are implemented in C++ and compiled with {GCC 14.3.0} using the -Ofast optimization flag. 

\subsection{{Experimental Results}}
\label{sec:exp-results}
\noindent{\textbf{Exp-1: Interval-Filtering ANN querying performance.} The query performance of our proposed UG and the compared IFANN baselines under the uniform workload is shown in Figure~\ref{fig:exp-1}. We have the following observations: \textbf{(1) Across all five datasets, UG consistently outperforms other methods.} Specifically, on the real-world S\&P dataset, when recall@10 = 95\%, UG reaches about 62,029 QPS, whereas the strongest baseline, H-HCNNG, achieves only about 587 QPS, yielding a 105$\times$ speedup. \textbf{(2) UG exhibits an even larger advantage in the high-recall regime.} In particular, on the S\&P dataset, at recall@10 = 98\%, UG achieves approximately 40,000 QPS, whereas the strongest baseline, H-NSG, reaches only 136 QPS, corresponding to a 294$\times$ speedup. \textbf{(3) H-HNSW, H-HCNNG, H-NSG, and H-VAMANA exhibit comparable performance across the five datasets, while significantly outperforming their respective counterparts.} This validates that the Hi-PNG algorithm is the key contributor, whereas existing graph-based methods fail to effectively handle such interval-aware ANN queries.}

 % In particular, in the high-recall region, UG consistently outperforms existing hierarchical IFANN methods, including H-HNSW, H-HCNNG, H-NSG, and H-VAMANA. This result shows that UG can substantially improve the efficiency of interval-filtered ANN queries on real-world data distributions. Moreover, UG also maintains a clear advantage on the ultra-high-dimensional DB-OpenAI dataset. In the high-accuracy region around recall@10 = 95\%, UG delivers about 4,092 QPS, while the strongest H-series baseline achieves only about 376.45 QPS, corresponding to a speedup of about 10.87$\times$. This indicates that the graph structure of UG remains robust even for ultra-high-dimensional vectors. }
\begin{figure*}[!t]
\centering
\vspace*{-2em}
\includegraphics[width=\textwidth]{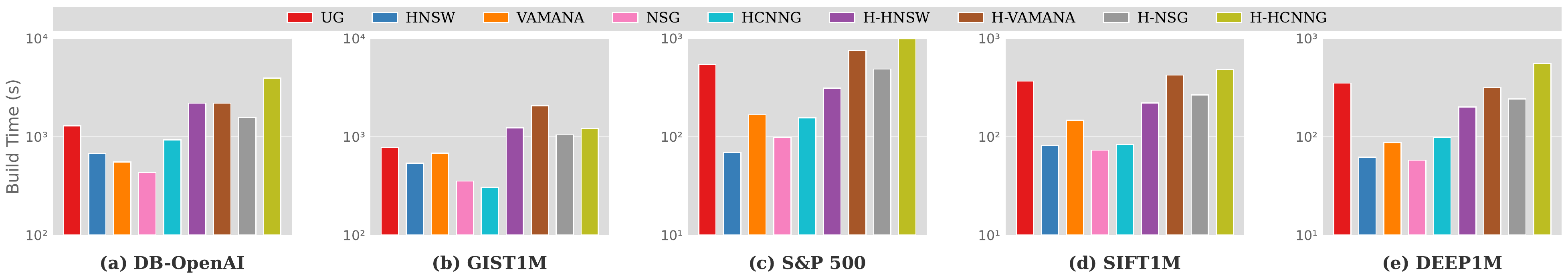}
\vspace*{-2.5em}
\caption{{The indexing time of all IFANN baselines across the five datasets.}}
\vspace*{-1.5em}
\label{fig:exp-2-1}
\end{figure*}

\begin{figure*}[!t]
\centering
\includegraphics[width=\textwidth]{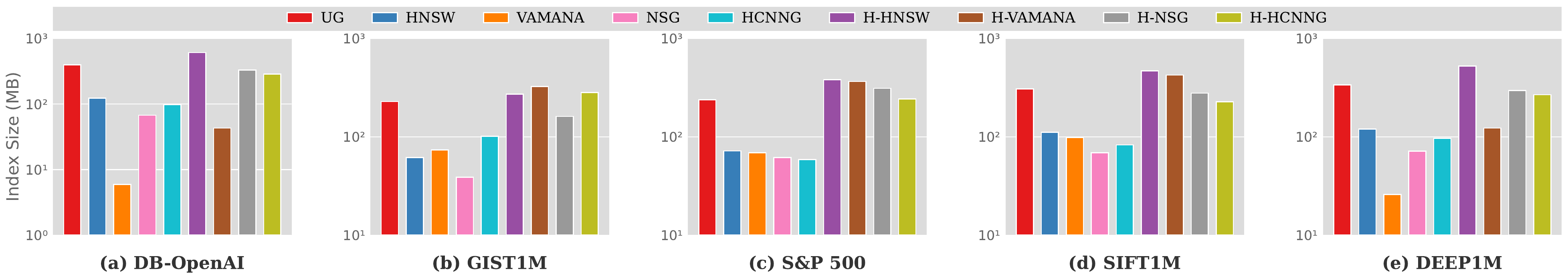}
\vspace*{-2.5em}
\caption{{The memory cost of all IFANN baselines across the five datasets.}}
\vspace*{-1.5em}
\label{fig:exp-2-2}
\end{figure*}

\begin{figure}[t!]
  \centering
  \includegraphics[width=\linewidth]{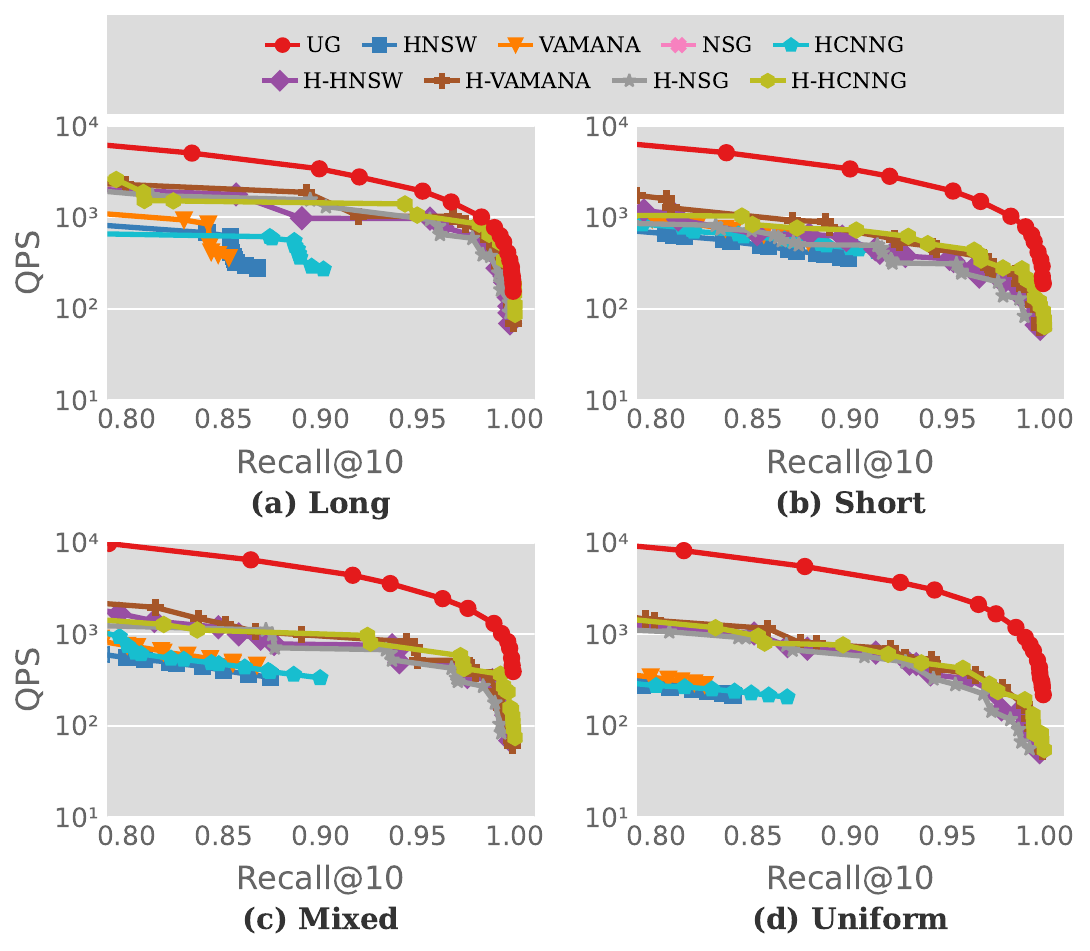}
  \vspace*{-2.5em}
  \caption{{The accuracy–efficiency trade-off for IFANN queries under diverse filtering workloads on GIST1M.}}
  \vspace*{-2.5em}
  \label{fig:exp-3}
\end{figure}

\noindent{\textbf{Exp-2: Query performance under diverse query types.} The query performance of our proposed UG and the ISANN, RFANN, and RSANN baselines is shown in Figure~\ref{fig:exp-4}. \textbf{The experimental results demonstrate that our proposed UG consistently delivers state-of-the-art query performance.} 
%This section compares the QPS--Recall performance of UG on . Overall, the behavior of UG varies across query types.
In particular, for ISANN, UG consistently outperforms the baseline. Under the uniform workload, when recall@10 = 95\%, UG achieves about 2127.66 QPS on GIST1M and 33333 QPS on SIFT1M, whereas HNSW achieves only about 245 and 3676 QPS, corresponding to speedups of about 8.7$\times$ and 9.1$\times$, respectively. For RSANN and RFANN, UG also achieves performance comparable to that of query-specific optimized baselines. It is worth noting that we investigate the scenario in which queries of different types are issued in the same query workload, which suggests that UG provides the best overall solution for handling diverse query types in diverse query workloads.}

\noindent{\textbf{Exp-3: Query performance under diverse filtering workloads.} To further assess the benefits of the proposed UG across different query workloads, we evaluate the IFANN query performance of all methods on the Long, Mixed, Short, and Uniform workloads of the GIST1M dataset. The querying performance is presented in Figure~\ref{fig:exp-3}. \textbf{The results confirm that our proposed UG consistently maintains its advantage across different workloads.} In particular, when recall@10 = 95\%, UG achieves about 1,999, 2,975, 2,014, and 2,761 QPS under the Long, Mixed, Short, and Uniform workloads, respectively. In comparison, the fastest baselines only achieve about 1,056, 684, 482, and 445 QPS under these four workloads, respectively, leading to speedups of about 1.9$\times$, 4.4$\times$, 4.2$\times$, and 6.2$\times$. This indicates that UG can sustain substantially higher throughput even when the filtering condition becomes more selective.}

%impact of filtering distributions on ,   and present the corresponding QPS--Recall curves in a figure in More specifically, in the high-recall region, UG consistently outperforms existing hierarchical IFANN baselines under all four workloads. Around }

%does not surpass the task-specific SOTA methods, which is expected. Around recall@10 = 95\%, the gap between UG and Timestamp on RSANN is about 1.24$\times$--1.89$\times$, while on RFANN the gap between UG and the strongest specialized methods (e.g., iRangeGraph or SuperPostfiltering) is about 2.88$\times$--3.11$\times$.We believe this reflects a trade-off between generality and specialization. Existing SOTA methods for RFANN and RSANN are usually optimized aggressively for a single query type, whereas UG is designed to support IF, IS, and other more general query patterns within a unified framework. As a result, UG is not always optimal on highly specialized tasks, but it still remains competitive, especially on RSANN where the gap is moderate. This suggests that the main value of UG is not to dominate every specialized setting individually, but to provide stronger unified query support at an acceptable performance cost.}

\noindent{\textbf{Exp-4: Indexing Cost.} Figures~\ref{fig:exp-2-1} and~\ref{fig:exp-2-2} report the indexing time and index size of different IFANN methods on the five datasets, while similar patterns are observed on other query types. \textbf{In terms of indexing time, UG is comparable to competitive baselines and even outperforms them on several datasets.} 
%is highly competitive and even 
For example, on GIST1M, UG takes about 678.9 s to build the index, whereas the best baseline, H-NSG, requires about 947.5 s, yielding a 1.4$\times$ speedup. %On DB-OpenAI, UG is also faster than all H-series methods. 
This shows that UG not only delivers superior query performance but also offers a clear advantage in index construction time. 
\textbf{For memory consumption, UG maintains a moderate index size compared with competitive baselines.} For instance, on GIST1M, UG uses about 218.7 MB, substantially less than H-HCNNG and H-HNSW, which require 272.4 MB and 262.0 MB. Equivalently, the indexes of H-HCNNG and H-HNSW are about 1.25$\times$ and 1.20$\times$ larger than that of UG, respectively. These results indicate that UG does not achieve its query advantage at the cost of excessive memory consumption. Overall, UG provides a favorable trade-off between indexing overhead and query efficiency.}

\noindent{\textbf{Exp-5: Querying performance with different $k$.} We further analyze the effect of different $k$ values on query performance on the GIST1M dataset, %under the uniform workload, 
as shown in Figure~\ref{fig:exp-6}. \textbf{Overall, QPS gradually decreases as k increases steadily.}%for both IFANN and RSANN queries, QPS gradually decreases as k increases, while the attainable recall improves. 
This is expected: a larger $k$ requires the system to return more results, which expands the search scope and increases the cost of candidate exploration. %At the same time, to maintain result quality under larger $k$, the search process must examine a sufficiently rich candidate set.
These results demonstrate that our proposed UG adapts well to varying retrieval requirements. %Although larger k inevitably introduces additional query cost, the reduction in QPS remains moderate, which further validates the efficiency and effectiveness of our proposed solution across different scenarios.
}

\noindent\textbf{Exp-6: Sensitivity study.} We conduct a parameter sensitivity study for UG on IFANN queries using the GIST1M dataset %under the uniform workload 
and the results are shown in Figure~\ref{fig:exp-5}. \textbf{This proves that UG is not highly sensitive to parameter variations.} In particular, %it maintains relatively stable QPS--Recall curves over a broad range of settings, indicating its robustness. 
for $ef_{attribute}$ and $ef_{spatial}$, increasing the parameter generally improves the high-recall region, although the gain gradually saturates. %From the overall trend of the curves, larger values tend to push the curves toward higher recall, while the differences across nearby settings are often moderate, showing that UG is fairly stable with respect to these search-width parameters. 
Meanwhile, larger values also lead to longer index construction time and higher memory consumption, so overly aggressive settings are unnecessary. The effect of iteration is more direct, and most of the gain is obtained within the first few iterations. As the iteration count increases from 1 to 3, the QPS-Recall curve improves substantially. However, the additional gain becomes small when it further increases to 4 and 5. % In terms of maximum recall, it rises from about \yin{97.60\%} at iteration=1 to about \yin{99.30\%} at iteration=2 and \yin{99.52\%} at iteration=3, and then only slightly improves to about 0.9965 and 0.9968 at iteration=4 and iteration=5, respectively. 
This shows that only a small number of iterations is needed to obtain most of the benefit. In contrast, %$max_{edges}$ has a more pronounced impact on the high-recall region. %Overly 
small values of $max_{edges}$ noticeably limit the final performance, while medium values already capture most of the gain. %For instance, when $max_{edges}$ increases from 75 to 200, the maximum recall improves from 97.44\% to 99.83\%, showing that a moderate increase in graph connectivity is effective for improving high-accuracy query performance; 
However, the gain also gradually saturates as the parameter becomes larger. At the same time, larger $max_{edges}$ values also imply higher construction and memory overhead. %Overall, these results suggest that UG is not highly sensitive to parameter variations. %does not rely on extreme parameter settings. 
In practice, moderate parameter values are already sufficient to achieve a favorable balance between query performance and indexing overhead.

\begin{figure}[t]\vspace*{-2em}
  \centering
  \includegraphics[width=\linewidth]{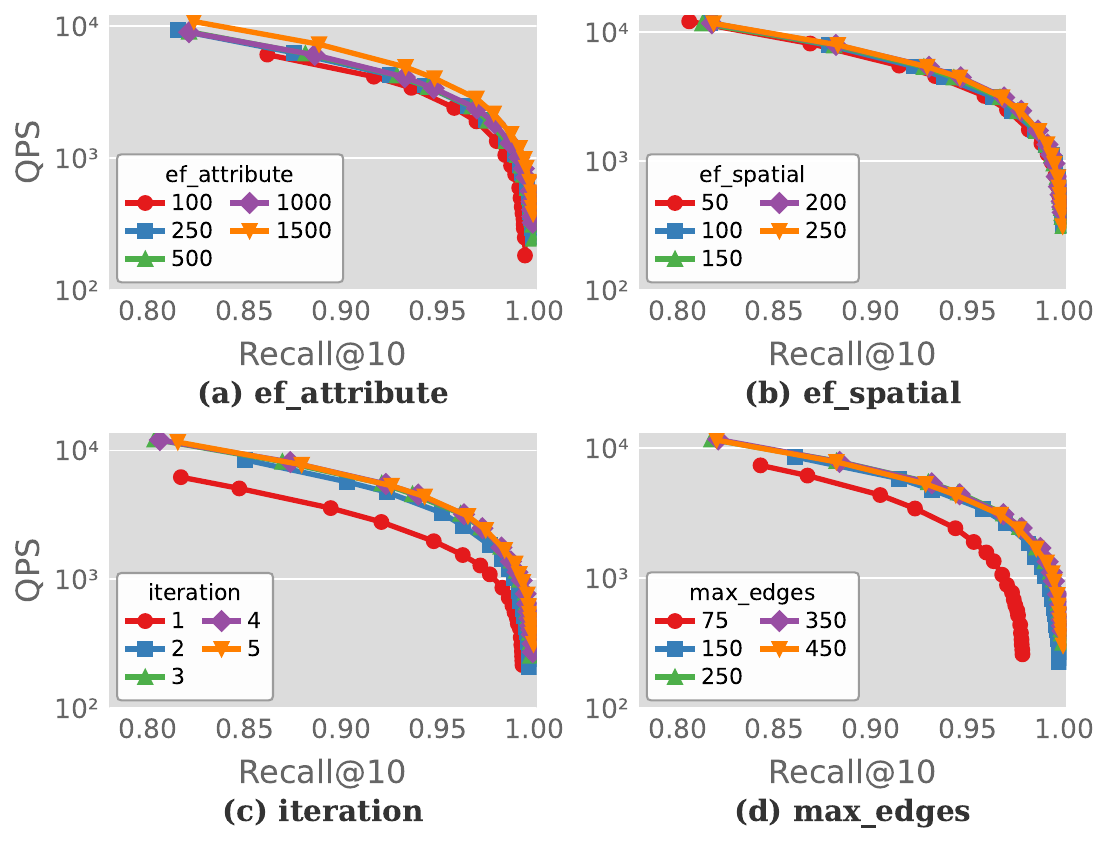}
  \vspace{-2.5em}
  \caption{{Parameter sensitivity}}
  \vspace{0em}
  \label{fig:exp-5}
\end{figure}

\noindent{\textbf{Exp-7: Scalability Study.} To evaluate the scalability of UG, we conduct IFANN experiments with increasing dataset size 
%on four subsets extracted 
from SIFT100M, namely 10M, 20M, 30M, and 40M. Figure~\ref{fig:exp-7} reports the query latency at recall@10 = 90\% and the index build time under different dataset sizes. Overall, as the dataset size increases, both query latency and build time grow steadily, while the overall trend remains smooth, demonstrating good scalability of UG on larger datasets.}

\begin{figure}[t]\vspace*{-1.5em}
  \centering
  \includegraphics[width=\linewidth]{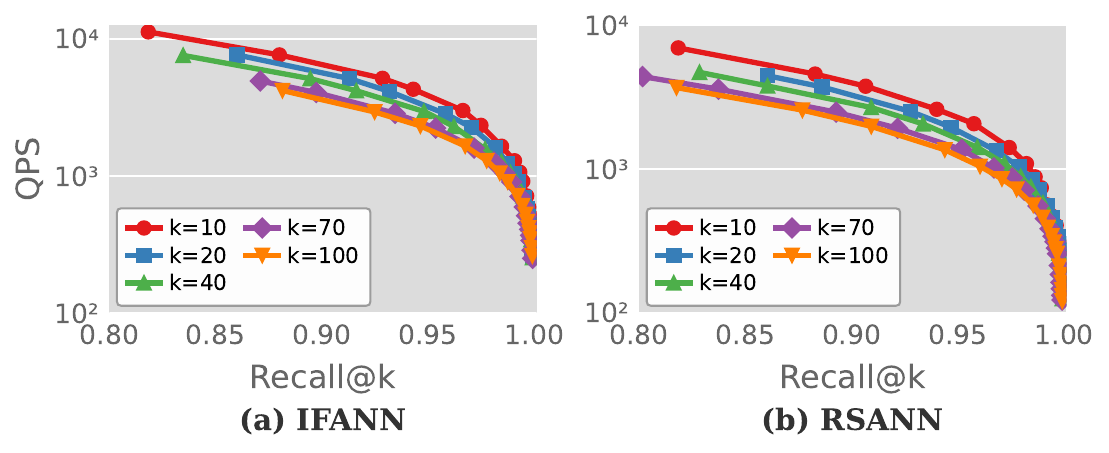}
  \vspace{-2.5em}
  \caption{{Querying performance with different $k$.}}
  \vspace{-2.0em}
  \label{fig:exp-6}
\end{figure}

\section{{Related Work}}

\noindent{\textbf{Approximate Nearest Neighbor Search.} Existing ANN indexes can be broadly categorized into four categories: graph-based~\cite{azizi2025graph,fu2017fast,harwood2016fanng,malkov2014approximate,malkov2018efficient,wang2021comprehensive}, quantization-based~\cite{gao2024rabitq,ge2013optimized,gong2012iterative,jegou2010product,matsui2018survey}, hashing-based~\cite{pham2022falconn++,wang2014hashing,wang2017survey}, and tree-based~\cite{beygelzimer2006cover,ram2019revisiting}. A more overview is provided in recent surveys ~\cite{pan2023survey,pan2024vector}. Extensive empirical evaluations~\cite{aumuller2020ann,azizi2025graph,li2019approximate,wang2017survey} have consistently shown that graph-based methods achieve state-of-the-art performance. This superiority stems from the graph's ability to encode proximity relationships, allowing query processing to converge rapidly by evaluating only a small fraction of the dataset~\cite{dobson2023scaling}. Most graph-based indexes ~\cite{wang2021comprehensive} are built upon four fundamental graph types: the Delaunay Graph (DG)~\cite{fortune2017voronoi}, Relative Neighborhood Graph ({RNG})~\cite{toussaint1980relative}, K-Nearest Neighbor Graph (KNNG)~\cite{paredes2005using}, and Minimum Spanning Tree (MST)~\cite{kruskal1956shortest}. Among these, {RNG}-based indexes achieve particularly strong performance due to their effective pruning strategy~\cite{wang2021comprehensive}. Building on {RNG}, we propose a novel %RNG
{ANN} index %for
to support efficient Interval‑aware ANN search.}

\noindent{\textbf{Attribute-filtered ANN Search.} Existing attribute-filtered ANN queries typically involve two types of attributes: keyword and numerical. Keyword attributes are unordered, whereas numerical attributes are inherently ordered. Consequently, different methods have been proposed to handle different types of attributes~\cite{patel2024acorn,zuo2024serf,yin2024listlearningindexspatiotextual}. For numerical attributes, several specialized queries have been proposed ~\cite{zuo2024serf,yang2025hi}, including RFANN and IFANN queries, as described in Section 2. To support these queries, various indexes have been introduced~\cite{wang2021milvus,xu2024irangegraph,yang2025hi}. Early studies~\cite{douze2025faiss,wang2021milvus,zhang2025efficient} follow a two-stage paradigm, constructing separate indexes for vector and attribute retrieval before merging the results. For example, ADBV~\cite{wei2020analyticdb} employs a B-Tree and a PQ index independently, selecting a retrieval strategy via a cost model. A key limitation of such approaches is that they do not co-optimize the index structure, resulting in sub-optimal performance compared with hybrid indexes ~\cite{chronis2025filtered,engels2024approximate,li2025attribute,liang2024unify,xu2024irangegraph,zuo2024serf}. Recently, many hybrid indexes have been proposed, such as iRangeGraph ~\cite{xu2024irangegraph}, which integrates graph and attribute indexes more directly by constructing a graph for each segment of a partitioned attribute range. While this design achieves high performance, it incurs substantial indexing overhead and exhibits limited adaptability to different query types. For keyword attributes, similar techniques and strategies have been proposed, such as Faiss~\cite{douze2025faiss}, Milvus~\cite{wang2021milvus}, VBASE~\cite{zhang2023vbase}, ACORN~\cite{patel2024acorn}, Filtered-DiskANN~\cite{gollapudi2023filtered} (microsoft). However, they also suffer from either low query performance or high indexing time and limited adaptability to different query types. Recently, several efforts have aimed to design a unified index to support different query types. For example, \cite{xie2025beyond} supports both conventional ANN queries and ANN queries with keyword predicates within a single index. However, designing one index to support for diverse numerical-attribute queries remains underexplored.}

\begin{figure}[t]
  \vspace*{-2em}
  \centering
  \includegraphics[width=\linewidth]{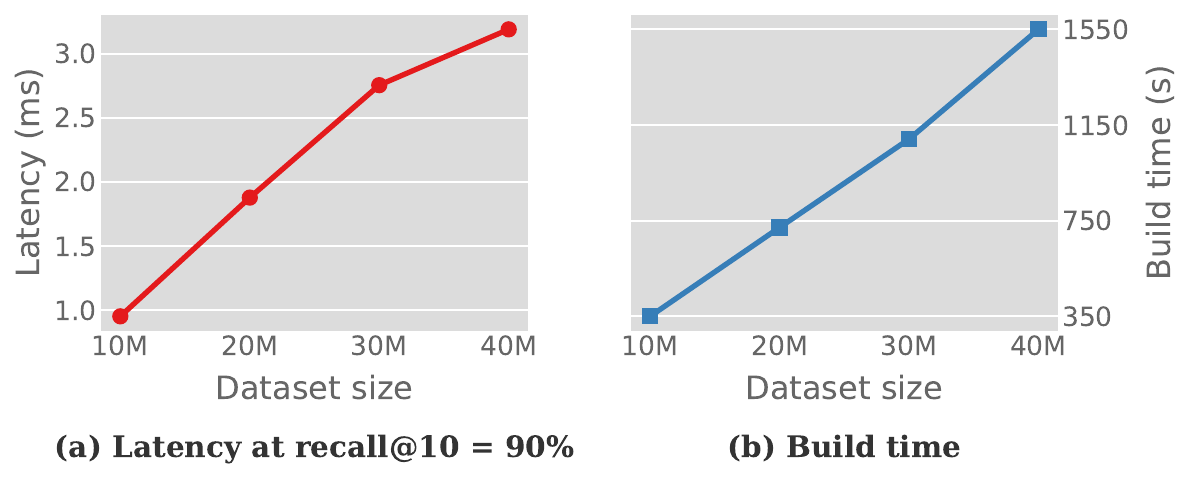}
  \vspace{-2.5em}
  \caption{{Scalability with increasing dataset size.}}
  \vspace{-1.5em}
  \label{fig:exp-7}
\end{figure}

\section{{Conclusion and Future Work}}
{In this paper, we propose an efficient graph index, UG, to handle various interval-aware ANN queries within a single index. We first introduce the Unified Interval-aware Relative Neighborhood Graph (URNG), which establishes the theoretical foundation for interval-aware ANN search. We then present UG, an efficient and practical approximation of URNG. Experimental results show that, for various interval-aware ANN queries, UG significantly outperforms the baselines in search efficiency, while incurring no additional index construction time or storage overhead.}

{One direction for future work is to extend our theoretical graph and index beyond numerical attributes to support more diverse filtered search scenarios, such as keyword filtering. Building upon this, we aim to investigate and develop a %comprehensive, 
unified theoretical framework capable of seamlessly accommodating arbitrary predicate filtering conditions, ultimately providing a generalized index %routing structure 
for complex hybrid queries.}
%unifies interval-containment-related queries within a single graph-based indexing framework. We first introduce the \emph{Unified Interval-aware Relative Neighborhood Graph} (URNG), which provides the theoretical foundation for attribute-filtered search. URNG guarantees the retrieval of the exact nearest neighbor under beam search and achieves an expected theoretical complexity on the same order as the MRNG built without attributes. 
% Since the exact construction of URNG is prohibitively expensive in practice, 
% In addition, our study suggests that the proposed pruning-aware condition can be extended beyond numerical attributes to support more general filtered search scenarios. Exploring such generalizations constitutes an important direction for future work.

\bibliographystyle{ACM-Reference-Format}
\bibliography{UG-used}

\appendix
\section{Proofs for the Complexity Analysis}
\label{app:complexity_proofs}

\begin{lemma}[Expected length of a monotonic path]
\label{lem:monotonic_path_length}
For any starting node \(x \in V\) and the interval-aware nearest neighbor \(y \in V\) of a query \(q\), the expected number of steps of a greedy monotonic walk from \(x\) to \(y\) on \(G\) is \(O(n^{1/d}\log n / \Delta r)\), where \(n:=|V|\) and \(\Delta r\) denotes the minimum gap between edge lengths among all non-isosceles triangles in the metric space.
\end{lemma}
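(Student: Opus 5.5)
The plan follows the standard MSNET/MRNG argument of Fu et al.~\cite{fu2017fast}, carried out inside a fixed semantic projection $G^\sigma$ restricted to the query-valid nodes. By Theorem~\ref{thm:heredity}, the induced subgraph $G[I]$ is itself a URNG on $V_I$. Theorem~\ref{thm:monotonicity} then gives a monotonic path from $x$ to $y$ inside $G[I]$, and greedy steps are taken along such a path. The task is therefore to bound the number of strict distance decrements.

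\textbf{Step 1 (per-step decrement).} Let $u$ be the current node and $u\neq y$. Either the edge $(u,y)$ is active, or there is a witness $w$ with $(u,w)$ active. Such a witness satisfies $\delta(u,w)<\delta(u,y)$ and $\delta(w,y)<\delta(u,y)$. Consider the triangle $(u,w,y)$. Since $(u,y)$ is its strictly longest side, the triangle is non-isosceles in the relevant sense, so $\delta(u,y)-\delta(w,y)\ge \Delta r$. Hence each step shrinks the distance to $y$ by at least $\Delta r$. As in~\cite{fu2017fast}, the comparison should be made between the longest edge and the remaining edge of the triangle, and the precise bookkeeping of this gap is where I would be careful. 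The same bound holds when $y$ is replaced by a query point $q$ near $y$, up to the usual triangle-inequality slack.

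\textbf{Step 2 (initial distance).} The number of steps is at most $\delta(x,y)/\Delta r$, so it remains to bound $\mathbb{E}[\delta(x,y)]$, or more precisely the expected length actually traversed. The route follows~\cite{fu2017fast}: assume points uniform in a bounded region of $\mathbb{R}^d$. The typical scale of the walk is governed by the maximum inter-point gap along greedy routes, giving an expected distance of order $n^{1/d}\log n$ in their normalization, where the $\log n$ comes from a union bound over nearest-neighbor balls. Under the uniform interval model, the valid set $V_I$ is an independent thinning of $V$ with some selectivity $p$. The same estimate applied to $V_I$ therefore gives $(pn)^{1/d}\log(pn)\le n^{1/d}\log n$. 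Combining this with Step~1 yields $O(n^{1/d}\log n/\Delta r)$.

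\textbf{Main obstacle.} The hardest step is Step 2: making the Fu et al.\ expected-length estimate rigorous in this interval-aware setting. I would simply import their lemma for the valid subset. The independence of intervals from the vectors justifies treating $V_I$ as a uniform subsample, so their bound applies verbatim with $n$ replaced by $|V_I|\le n$.
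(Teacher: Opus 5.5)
Your proposal is essentially the paper's argument: the paper's proof just observes that each semantic projection of URNG is an MSNET by Theorem~\ref{thm:monotonicity} and imports the bound $O(n^{1/d}\log(n^{1/d})/\Delta r)$ of Fu et al.~\cite{fu2017fast}, which is your fallback of importing their lemma. Your appeal to Theorem~\ref{thm:heredity} is a useful addition the paper leaves implicit: it makes the walk run in an MSNET on $V_I$, whose $\Delta r$ is no smaller than that of $V$.

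One caution concerns Step~2: your explanation of the $\log n$ factor via a union bound over nearest-neighbour balls is not right, but it is also unnecessary. Once Step~1 gives a decrease of at least $\Delta r$ per step, the number of steps is deterministically at most $R/\Delta r$, with $R$ the maximal inter-point distance. This is $O(n^{1/d}/\Delta r)$ both in a fixed region ($R=O(1)$) and at constant density under Fu et al.'s constraint $\kappa V_S\ge V_B(R)$ ($R=O(n^{1/d})$). Step~1 does need general position, because a strictly longest side does not by itself rule out an isosceles triangle.
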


\begin{proof}
Since URNG satisfies monotonic searchability, it belongs to the class of monotonic search networks (MSNETs) in the NSG literature. Therefore, the expected greedy path length on URNG is upper bounded by the corresponding MSNET bound, namely \(O(n^{1/d}\log(n^{1/d})/\Delta r)\), which is \(O(n^{1/d}\log n/\Delta r)\).
\end{proof}

\begin{lemma}[Average degree bound]
\label{lem:urng_degree_bound}
Let \(D_r\) denote the average degree of RNG. Then the average degree of URNG is bounded by \(O(C_{\mathrm{urng}}D_r)\). Under the uniform interval model, \(C_{\mathrm{urng}}\) is a constant independent of \(n\) and may take \(C_{\mathrm{urng}}=6+\frac{13}{3}=\frac{31}{3}\).
\end{lemma}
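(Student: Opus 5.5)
The plan is to bound the degree of each semantic projection separately and then add the two bounds. An edge $(u,v)$ of URNG is present exactly when its IF bit or its IS bit is active. Writing $D_{\mathrm{IF}}$ and $D_{\mathrm{IS}}$ for the average degrees of the IF and IS projections $G^{\mathrm{IF}}$ and $G^{\mathrm{IS}}$, the average degree of URNG is therefore at most $D_{\mathrm{IF}}+D_{\mathrm{IS}}$. It then suffices to show $D_{\mathrm{IF}}\le 6D_r$ and $D_{\mathrm{IS}}\le \tfrac{13}{3}D_r$ in expectation under the uniform interval model. The two constants match the stated $C_{\mathrm{urng}}=6+\tfrac{13}{3}$, which suggests this split.

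For each projection, I would fix a node $u$ and use the classical angular argument for RNG/MRNG degree bounds. If $v,v'$ are both retained neighbours of $u$ with $\delta(u,v')<\delta(u,v)$ and $\delta(v,v')<\delta(u,v)$, then in the plain RNG $v'$ would witness against $v$. In URNG $v'$ fails to prune $v$ only when the semantic condition fails: $I_{v'}\not\subseteq I_u\cup I_v$ for IF, or $I_u\cap I_v\not\subseteq I_{v'}$ for IS. (The witness-edge bit requirement only makes pruning harder, so I would upper-bound by ignoring it.) The surviving neighbour set of $u$ in $G^\sigma$ therefore splits into ``layers'': within any geometric cone where RNG keeps one neighbour, URNG may keep a chain $v_1,v_2,\dots$ ordered by distance, in which each later element is not semantically dominated by any earlier one. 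The expected number of retained neighbours per cone is then the expected length of such a chain of mutually non-dominating intervals. With endpoints i.i.d.\ uniform, this is a records-type computation. The probability that the $j$-th candidate escapes pruning by all earlier ones decays fast enough that the sum converges to a constant. Multiplying that constant by the per-node RNG cone count gives the bound $O(C_{\mathrm{urng}}D_r)$.

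For IF, the failure event is that $I_{v'}$ sticks out of the hull $I_u\cup I_v$. Since the hull is at least as large as $I_u$, I would bound this probability by a quantity not depending on $v$ and sum a geometric-like series, aiming for the factor $6$. For IS, I would first discard pairs with $I_u\cap I_v=\emptyset$, where the IS bit is already $0$. For the rest, the failure event is that $I_{v'}$ does not cover $I_u\cap I_v$. This is an explicit polynomial integral over four uniform endpoints, and evaluating it should give $13/3$.

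The main obstacle is making the layering rigorous. Pruning in URNG is recursive through witness bits, and witnesses are not independent of the ordering by distance. The uniform integrals are also not obviously summable to exactly $6$ and $13/3$. I would first try to decouple geometry from intervals by conditioning on the vector configuration, then use independence of the intervals from the vectors. If exact constants resist, I would settle for a finite constant independent of $n$, which is all the statement needs.
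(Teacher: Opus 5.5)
Your architecture matches the paper's. You bound the IF and IS projections separately and fix $u$. You split directions into cones; the paper uses half-angle $30^\circ$, so every closer point of the same cone automatically lies in the lune of $(u,v)$. You order the cone's $m$ points by distance, write the expected number of survivors as $\sum_{t=1}^m \Pr(Y_t=1)$, and add the two limiting constants.

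The gap is in your plan for the IF sum. Suppose you use $|J_t|\ge |I_u|$, where $J_t=I_u\cup I_{v_t}$, to get a bound not depending on $v_t$. You obtain $\Pr(Y_t=1\mid I_u)\le(1-\ell^2)^{t-1}$ with $\ell=|I_u|$, hence at most $\min(m,\ell^{-2})$ survivors per cone. But $\ell$ is the length of a single uniform interval, and its density $2(1-\ell)$ does not vanish at $0$. So $\mathbb{E}[\ell^{-2}]=\infty$ and $\mathbb{E}[\min(m,\ell^{-2})]=\Theta(\sqrt m)$. Since $m$ grows with $n$, this route gives no $n$-independent constant at all.

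The missing idea is to keep $v_t$ in the conditioning. $|J_t|$ is the range of four i.i.d.\ uniforms, with density $12p^2(1-p)$. Summing $\sum_t(1-p^2)^{t-1}=(1-(1-p^2)^m)/p^2$, the factor $p^2$ in the density cancels the $1/p^2$, leaving $12\int_0^1(1-p)\bigl(1-(1-p^2)^m\bigr)\,dp\to 6$. The IS computation rests on the same cancellation. Conditioned on $J_t=I_u\cap I_{v_t}=[L,R]$, a fresh interval covers it with probability $2L(1-R)$, and $J_t$ has sub-density $16L(1-R)$ on $0<L<R<1$. The ratio is the constant $8$, giving $8\cdot\tfrac12=4$.

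Two further steps need correcting.

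First, ``the IS bit is already $0$ when $I_u\cap I_v=\emptyset$'' is a rule of the UG pruning routine (Algorithm~\ref{alg:unified_prune}), not of Definition~\ref{def:urng}. In the definition the IS witness condition $\emptyset\subseteq I_w$ is vacuous, so in the cone model such a point survives exactly when it is the first point of its cone. This contributes $\Pr(J_1=\emptyset)=\tfrac13$, which is where $\tfrac{13}{3}=\tfrac13+4$ comes from. Discarding these pairs undercounts IS edges, so it does not give a valid upper bound.

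Second, your reason for ignoring the witness-bit requirement is backwards. Dropping a condition that makes pruning harder lets more points prune, so it \emph{lower}-bounds the surviving degree. What justifies treating every earlier cone point as an eligible witness (as the paper's $(1-p^2)^{t-1}$ tacitly does) is transitivity of the semantic conditions:
\begin{itemize}
\item For IF: if $I_w\subseteq I_u\cup I_v$ and $w$ was itself pruned by $w'$ with $I_{w'}\subseteq I_u\cup I_w$, then $I_{w'}\subseteq I_u\cup I_v$.
\item For IS: likewise $I_u\cap I_v\subseteq I_u\cap I_w\subseteq I_{w'}$.
\end{itemize}
Following the pruning chain therefore reaches an active witness for $v$, provided its members stay in the lune of $(u,v)$. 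That is automatic inside a $30^\circ$ cone.
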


\begin{proof}
Fix a node \(u \in V\), and partition the unit sphere centered at \(u\) into \(N_d\) cones of half-angle \(30^\circ\). For a fixed cone, order the candidate points by distance to \(u\), let \(Y_t\) indicate whether the \(t\)-th point is retained, and let \(K=\sum_t Y_t\). Then \(\mathbb E[K]=\sum_t \Pr(Y_t=1)\).

For IF semantics, let \(J_t=I_u\cup I_t\) and let \(D=|J_t|\). Conditioned on \(D=p\), a random interval is fully contained in \(J_t\) with probability \(p^2\), so \(\Pr(Y_t=1\mid D=p)=(1-p^2)^{t-1}\). Since \(D\) is the range of four i.i.d. uniform variables, its density is \(f_D(p)=12p^2(1-p)\) on \((0,1)\). Hence \(\mathbb E[K]=\int_0^1 \frac{1-(1-p^2)^m}{p^2}\,12p^2(1-p)\,dp = 12\int_0^1 (1-p)(1-(1-p^2)^m)\,dp\), which converges to \(6\) as \(m\to\infty\).

For IS semantics, let \(J_t=I_u\cap I_t=[L,R]\), which may be empty. Conditioned on \(J_t=[L,R]\neq\varnothing\), a random interval covers it with probability \(2L(1-R)\), so \(\Pr(Y_t=1\mid J_t=[L,R])=(1-2L(1-R))^{t-1}\). Moreover, \(\Pr(J_t=\varnothing)=1/3\), and when \(J_t\neq\varnothing\), its density is \(f_J(L,R)=16L(1-R)\) on \(0<L<R<1\). Therefore, \(\mathbb E[K]=\frac13+\int_{0<L<R<1}16L(1-R)\sum_{t=1}^m (1-2L(1-R))^{t-1}\,dL\,dR\), which simplifies to \(\frac13+8\int_{0<L<R<1}(1-(1-2L(1-R))^m)\,dL\,dR\), and converges to \(\frac{13}{3}\) as \(m\to\infty\).

Thus the expected number of retained neighbors per cone is bounded by \(C_{\mathrm{urng}}=6+\frac{13}{3}\). Summing over all cones shows that the average degree of URNG is bounded by a constant factor times that of RNG, namely \(O(C_{\mathrm{urng}}D_r)\).
\end{proof}

\begin{proof}[Proof of Theorem~\ref{thm:urng_search_complexity}]
By Lemma~\ref{lem:monotonic_path_length}, the expected number of expansion steps is \(O(n^{1/d}\log n/\Delta r)\). By Lemma~\ref{lem:urng_degree_bound}, each expansion examines \(O(C_{\mathrm{urng}}D_r)\) neighbors in expectation. Therefore, the total expected search complexity is \(O(C_{\mathrm{urng}}D_r\,n^{1/d}\log n/\Delta r)\). Since \(C_{\mathrm{urng}}\) is constant and \(D_r\) is bounded independently of \(n\), this reduces to \(O(n^{1/d}\log n/\Delta r)\). Under the spacing assumption \(\Delta r \approx O(n^{-\epsilon/d})\) with \(0<\epsilon<d\), the bound becomes \(O(n^{(1+\epsilon)/d}\log n)\), which is well approximated by \(O(n^{1/d}\log n)\) when \(\epsilon \ll d\).
\end{proof}

\begin{proof}[Proof of Theorem~\ref{thm:urng_index_size}]
The total index size consists of the storage for nodes and edges. The number of nodes is \(O(n)\). By Lemma~\ref{lem:urng_degree_bound}, the expected degree upper bound of URNG is \(O(C_{\mathrm{urng}}D_r)\), where \(D_r\) is the average bound of RNG. Since the total number of edges in RNG is \(S_r=nD_r\), the expected total number of edges in URNG is \(O(C_{\mathrm{urng}}nD_r)=O(C_{\mathrm{urng}}S_r)\). Because \(C_{\mathrm{urng}}\) is independent of \(n\), the result follows.
\end{proof}

\begin{proof}[Proof of Theorem~\ref{thm:exact_build}]
In the exact construction procedure, one enumerates each node \(x\), sorts the remaining nodes by their distances to \(x\), and then checks all candidate witness nodes to determine whether each edge can be pruned. The sorting stage costs \(O(n^2d)\), while the witness enumeration stage costs \(O(n^3)\). Since typically \(n \gg d\), the overall complexity is \(O(n^3)\).
\end{proof}

\end{document}